\newtheorem{theorem}{Theorem}[section]
\newtheorem{lemma}[theorem]{Lemma}
\newtheorem{proposition}[theorem]{Proposition}
\newtheorem{corollary}[theorem]{Corollary}
{\theoremstyle{remark} \newtheorem{remark}[theorem]{Remark}}
\newcommand{\CC}{{\mathcal C}}
\newcommand{\VV}{{\mathcal V}}
\newcommand{\XX}{{\mathcal X}}
\newcommand{\C}{{\mathbb C}}
\newcommand{\R}{{\mathbb R}}
\newcommand{\Tr}{{\operatorname{t}}}
\newcommand{\Ran}{{\operatorname{Range}}}
\newcommand{\Ker}{{\operatorname{Ker}}}
\title[Dirac operators, shell interactions and gauge functions]{Dirac operators, shell interactions and\\discontinuous gauge functions across the boundary}
\author[A. Mas]{Albert Mas}
\date{}
\subjclass[2010]{Primary 81Q10, Secondary 35Q40.}
\keywords{Dirac operator, shell potential, unitary equivalence.}
\thanks{Mas was supported by the {\em Juan de la Cierva} program JCI2012-14073 and the spanish funding projects MTM2011-27739 and MTM2014-52402.}
\address{A. Mas. Departament de Matem\`atica Aplicada I,
ETSEIB, Universitat Polit\`ecnica de Catalunya, Avda. Diagonal 647, 08028 Barcelona (Spain)}
\email{amasblesa@gmail.com}
\begin{document}

\begin{abstract}
Given a bounded smooth domain $\Omega\subset\R^3$, we explore the relation between couplings of the free Dirac operator $-i\alpha\cdot\nabla+m\beta$ with pure electrostatic shell potentials $\lambda\delta_{\partial\Omega}$ ($\lambda\in\R$) and some perturbations of those potentials given by the normal vector field $N$ on the shell $\partial\Omega$, namely $\{\lambda_e+\lambda_n(\alpha\cdot N)\}\delta_{\partial\Omega}$ ($\lambda_e$, $\lambda_n\in\R$). Under the appropiate change of parameters, the couplings with perturbed and unperturbed electrostatic shell potentials yield unitary equivalent self-adjoint operators. The proof relies on the construction of an explicit family of unitary operators that is well adapted to the study of shell interactions, and fits within the framework of gauge theory. A generalization of such unitary operators also allow us to deal with the self-adjointness of couplings of 
$-i\alpha\cdot\nabla+m\beta$ with some shell potentials of magnetic type, namely $\lambda(\alpha\cdot N)\delta_{\partial\Omega}$ with $\lambda\in\CC^1(\partial\Omega)$.
\end{abstract}

\maketitle

\section{Introduction}
The main purpose of this paper is to explore the relation between couplings of the free Dirac operator with pure electrostatic shell potentials and some concrete perturbations of those potentials given by the normal vector field on the shell where the formers are defined. The main result in this article states that, under the appropiate change of parameters, the couplings with perturbed and unperturbed electrostatic shell potentials yield unitary equivalent self-ajoint operators. This is proven by constructing an explicit family of simple unitary operators that relates both couplings. A generalization of such unitary operators also allow us to deal with the self-adjointness of couplings of the free Dirac operator with some regular shell potentials of magnetic type.

The free Dirac operator in $\R^3$ is defined by
$H=-i\alpha\cdot\nabla+m\beta,$
where $\alpha=(\alpha_1,\alpha_2,\alpha_3)$,
\begin{equation}\label{freedirac}
\begin{split}
\alpha_j
=&\left(\begin{array}{cc} 0 & \sigma_j\\
\sigma_j & 0 \end{array}\right)
\quad\text{for }j=1,2,3,\quad
\beta
=\left(\begin{array}{cc} I_2 & 0\\
0 & -I_2 \end{array}\right),\quad
I_2
=\left(\begin{array}{cc} 1 & 0\\
0 & 1 \end{array}\right),\\
&\text{and}\quad
\sigma_1
=\left(\begin{array}{cc} 0 & 1\\
1 & 0 \end{array}\right),
\quad\sigma_2
=\left(\begin{array}{cc} 0 & -i\\
i & 0 \end{array}\right),
\quad\sigma_3
=\left(\begin{array}{cc} 1 & 0\\
0 & -1 \end{array}\right)
\end{split}
\end{equation}
compose the family of {\em Pauli matrices}. Note that $H$ acts on functions $\varphi:\R^3\to\C^4$. Throughout this article we assume $m>0$.

The shell where the potentials are defined corresponds to the boundary of a bounded smooth domain $\Omega\subset\R^3$. Let $\sigma$ and $N$ be the surface measure and outward unit normal vector field on ${\partial\Omega}$, respectively. For convenience of notation, we set $\Omega_+=\Omega$ and $\Omega_-=\R^3\setminus\overline{\Omega}$, so ${\partial\Omega}=\partial\Omega_\pm$. Given $\lambda\in\R$ and $\varphi:\R^3\to\C^4$, the electrostatic shell potential $V_\lambda$ applied to $\varphi$ is formally defined as
$$V_\lambda\varphi=\lambda\frac{\varphi_++\varphi_-}{2}\sigma,$$
where $\varphi_\pm$ denote the boundary values of $\varphi$ (whenever they exist in a reasonable sense) when one approaches to ${\partial\Omega}$ from $\Omega_\pm$. Therefore, $V_\lambda$ maps functions defined in $\R^3$ to vector measures of the form $f\sigma$ with $f:{\partial\Omega}\to\C^4$. In particular, one can interpret $V_\lambda\varphi$ as the distribution $\lambda\varphi\delta_{\partial\Omega}$ when acting on functions $\varphi$ which have a well-defined trace on ${\partial\Omega}$, where $\delta_{{\partial\Omega}}$ denotes the Dirac-delta distribution on ${\partial\Omega}$. The self-adjoint character of $H+V_\lambda$ was already treated in \cite{AMV1}, although previous results in the case that $\partial\Omega$ is a sphere or, much more in general, viewing $H+V_\lambda$ as a particular instance of singular perturbations of self-adjoint operators were obtained in \cite{Dittrich} and \cite{Posi1}, respectively. 

The perturbed electrostatic shell potentials mentioned before are given, for $\lambda_e,\lambda_n\in\R$, by shell potentials of the type
\begin{equation*}
V_{\lambda_e,\lambda_n}\varphi=\{\lambda_e+\lambda_n(\alpha\cdot N)\}\frac{\varphi_++\varphi_-}{2}\sigma.
\end{equation*}
From the previous definitions, $V_\lambda=V_{\lambda,0}$. The self-adjoint character of $H+V_{\lambda_e,\lambda_n}$ can be also dealt with the results in \cite{Posi1} but, for the reader's convenience, in this article we include the construction of this self-adjoint operator.

The main purpose of this note is to show that, in general, the spectral study of $H+V_{\lambda_e,\lambda_n}$ can be reduced to the one of $H+V_\lambda$ for some properly chosen $\lambda\in\R$ related to $\lambda_e$ and $\lambda_n$. This reduction relies on the fact that there exists a unitary equivalence between these two operators. Moreover, we show the explicit connection between $\lambda_e$, $\lambda_n$ and $\lambda$ which, indeed, is independent of the underlying domain $\Omega$.  
As a consequence of this unitary equivalence, all the results about existence of pure point spectrum in $(-m,m)$ for $H+V_\lambda$ obtained in \cite{AMV2} as well as the isoperimetric-type inequality for electrostatic shell potentials shown in \cite{AMV3} can be transferred to the perturbed case $H+V_{\lambda_e,\lambda_n}$, once they are properly restated in terms of $\lambda_e$ and $\lambda_n$. As a byproduct, the isospectral transformation 
\begin{equation}\label{xxx}
H+V_\lambda\longleftrightarrow H+V_{-4/\lambda}
\end{equation}
obtained in \cite{AMV2} shows up as a particular example of the construction developed here.

Most of the results in this article fit within the framework of {\em gauge transformations}, but where the gauge functions are discontinuous across $\partial\Omega$. More precisely, since the function $\eta=\lambda_n\chi_{\Omega_-}$ is constant in $\Omega_\pm$ and satisfies $\eta_-=\lambda_n$ and $\eta_+=0$ on $\partial\Omega$, where $\eta_\pm$ denote the boundary values of $\eta$ when we approach to $\partial\Omega$ from $\Omega_\pm$, then $$\nabla\eta=\lambda_n N\sigma\quad\text{and}\quad
\alpha\cdot\nabla\eta=\lambda_n(\alpha\cdot N)\sigma.$$ 
But $\nabla\times(\lambda_n N\sigma)
=\nabla\times\nabla\eta=0$ in the sense of distributions, and thus $\lambda_n(\alpha\cdot N)\sigma$ corresponds to a (discontinuous across the boundary) change of gauge in the Dirac operator. However, at the end of the article, we also deal with the self-adjoint character of shell interactions of the form
\begin{equation}\label{papapa}
(H+\VV_{\lambda})\varphi=H\varphi+\lambda(\alpha\cdot N)\frac{\varphi_++\varphi_-}{2}\sigma
\end{equation}
for $\lambda\in\CC^1(\partial\Omega)$, which can be interpreted as a magnetic shell interaction in the normal direction in case that $\lambda$ is non-constant (we wrote $\VV_\lambda$ instead of $V_{0,\lambda}$ to make distinction between the non-constant and constant case, respectively).

Regarding the structure and the concrete contents of the article, in Section \ref{s preli} we recall some preliminary facts, all of them extracted from \cite{AMV1}, which deal with a general construction of self-adjoint shell interactions for Dirac operators (see Therorem \ref{pre t1}) as well as the boundary behaviour on $\partial\Omega$ of the functions in the domain of definition of those operators (see Lemma \ref{l jump}). In Section \ref{s uni equiv} we introduce a family of unitary operators that, when applied to the shell interactions presented in Section \ref{s preli}, allow us to generate a collection of unitary equivalent operators whose description in the terms of Section \ref{s preli} is also given (see Lemma \ref{lema elec}).
The construction of these unitary maps is based on the one developed in the proof of \cite[Theorem 3.6]{BEL}. 

Section \ref{s applications} contains the main applications of the abstact results developed in the previous sections. First, in Theorem \ref{t elec1} we provide an explicit description of the domain of definition where $H+V_{\lambda_e,\lambda_n}$ is self-adjoint. Then, in Theorem \ref{main apli theo} we present all the possible unitary equivalent self-adjoint operators $H+V_{\lambda_e',\lambda_n'}$ that we can obtain (with the method developed in Section \ref{s uni equiv}) from a given self-adjoint one $H+V_{\lambda_e,\lambda_n}$ and which can be described in the same terms as the former. In the subsequent corollaries we explore the scope of Theorem \ref{main apli theo}. For the reader's convenience, those corollaries are stated below somewhat informally (in particular, we omit the description of the domains where the couplings are defined, which are detailed in Section \ref{s applications}). Theorems \ref{coro1 inf}, \ref{coro2 inf} and \ref{coro3 inf} correspond to Corollaries \ref{coro1}, \ref{coro2} and \ref{coro3}, respectively.
\begin{theorem}\label{coro1 inf}
Let $\lambda_e,\lambda_n\in\R$ be such that $\lambda_e^2-\lambda_n^2\neq 0,4$. Then
$$H+V_{\lambda_e,\lambda_n}\text{ and }
H+V_{-4\lambda_e/(\lambda_e^2-\lambda_n^2),
4\lambda_n/(\lambda_e^2-\lambda_n^2)}$$
are unitary equivalent self-adjoint operators. 
\end{theorem}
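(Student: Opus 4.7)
The plan is to invoke the abstract machinery of Sections \ref{s preli}--\ref{s uni equiv}, specialized through Lemma \ref{lema elec}, to produce the desired unitary equivalence, and then to read off the transformation law on the parameters $(\lambda_e,\lambda_n)$. The self-adjointness of $H+V_{\lambda_e,\lambda_n}$ is handled separately by Theorem \ref{t elec1}, so the content of this theorem reduces to identifying which other pairs $(\lambda_e',\lambda_n')$ are reachable from $(\lambda_e,\lambda_n)$ by conjugation with a unitary from the family constructed in Section \ref{s uni equiv}, and checking that the claimed pair lies in that orbit.

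First I would fix the unitary operator $U$ from Section \ref{s uni equiv}: it is (a version of) multiplication by a function discontinuous across $\partial\Omega$, modeled on the gauge function $\eta=\lambda_n\chi_{\Omega_-}$, whose distributional gradient produces precisely the normal shell term $(\alpha\cdot N)\sigma$. Since $U$ is unitary on $L^2(\R^3)^4$, the operator $U^*(H+V_{\lambda_e,\lambda_n})U$ is self-adjoint on the transported domain, and Lemma \ref{lema elec} asserts that it again has the form $H+V_{\lambda_e',\lambda_n'}$ for some new pair $(\lambda_e',\lambda_n')$ depending rationally on $(\lambda_e,\lambda_n)$ and on the free parameter of $U$.

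Second, I would choose that free parameter so that the induced map on the parameters is the claimed involution. The natural invariant driving the computation is the discriminant $\lambda_e^2-\lambda_n^2$: using the anticommutation relations of the Dirac matrices one has $(\alpha\cdot N)^2=|N|^2I_4=I_4$ on $\partial\Omega$, so that
\[
\bigl(\lambda_e+\lambda_n(\alpha\cdot N)\bigr)\bigl(\lambda_e-\lambda_n(\alpha\cdot N)\bigr)=(\lambda_e^2-\lambda_n^2)I_4.
\]
This is the algebraic reason why the pair $(-4\lambda_e/(\lambda_e^2-\lambda_n^2),\,4\lambda_n/(\lambda_e^2-\lambda_n^2))$ is the natural inverse-type image of $(\lambda_e,\lambda_n)$, and why $\lambda_e^2-\lambda_n^2\neq 0$ is required (invertibility of the interaction symbol), while $\lambda_e^2-\lambda_n^2\neq 4$ is required in order for $H+V_{\lambda_e,\lambda_n}$ to be self-adjoint in the first place, as described by Theorem \ref{t elec1}. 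As a sanity check, setting $\lambda_n=0$ collapses the formula to $\lambda_e\mapsto -4/\lambda_e$, recovering the isospectral transformation \eqref{xxx} from \cite{AMV2}.

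The main obstacle will be the bookkeeping in step two: one has to verify that the unitary family of Section \ref{s uni equiv} does contain an element whose induced action on $(\lambda_e,\lambda_n)$ is exactly the stated fractional-linear map, and that the transported domain agrees with the one on which $H+V_{\lambda_e',\lambda_n'}$ is declared self-adjoint in Theorem \ref{t elec1}. This is a direct but careful trace-and-jump computation using Lemma \ref{l jump}, combined with the characterization of self-adjoint extensions provided by Theorem \ref{pre t1}; the rest of the argument is then a corollary of Lemma \ref{lema elec}.
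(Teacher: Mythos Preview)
Your outline is correct and follows the paper's approach: self-adjointness via Theorem \ref{t elec1}, then unitary equivalence via the family $U_z$ of Section \ref{s uni equiv} combined with Lemma \ref{lema elec}, with the parameter computation carried out (in the paper) by Lemma \ref{propo elec} and packaged as Theorem \ref{main apli theo}. The paper's actual proof of this corollary is the single line ``apply Theorem \ref{main apli theo} with $\theta=\pi$''; the bookkeeping you flag as the main obstacle collapses because for $z=-1$ one gets $\gamma=-4$ and $\lambda_n'=-\lambda_n$ directly from \eqref{pre eq4}, and the remaining hypotheses of Theorem \ref{main apli theo} reduce exactly to $\lambda_e^2-\lambda_n^2\neq 0,4$.

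Two small corrections to your description. First, the free parameter in $U_z$ is a unimodular $z=e^{i\theta}$, not $\lambda_n$; the gauge function is $\theta\chi_{\Omega_-}$ with $\theta$ constant, and the relevant choice here is $\theta=\pi$. Second, Lemma \ref{lema elec} does not by itself assert that the conjugated operator is again of the form $H+V_{\lambda_e',\lambda_n'}$; it only gives the inclusion $(T_\Lambda)_z\subset T_{\Lambda_z}$ once you have produced a $\Lambda_z$ satisfying \eqref{uni equiv eq1}. The explicit solution of \eqref{uni equiv eq1} for $\Lambda=\Lambda_{\lambda_e,\lambda_n}$ is the content of Lemma \ref{propo elec}, and the upgrade from inclusion to equality uses that both sides are self-adjoint. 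Your algebraic remark on $(\lambda_e+\lambda_n(\alpha\cdot N))(\lambda_e-\lambda_n(\alpha\cdot N))=(\lambda_e^2-\lambda_n^2)I_4$ is good motivation but is not the mechanism of the proof.
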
 
If we choose $\lambda_n=0$ in Theorem \ref{coro1 inf}, we get the isospectral relation \eqref{xxx}.

\begin{theorem}\label{coro2 inf}
Let $\lambda_e,\lambda_n\in\R$ be such that $\lambda_e\neq0$ and $\lambda_e^2-\lambda_n^2=-4$. 
Then 
$$H+V_{\lambda_e,\lambda_n}\text{ and }
H+V_{(\pm2\lambda_n-4)/\lambda_e,0}$$
are unitary equivalent self-adjoint operators. 
\end{theorem}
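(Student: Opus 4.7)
The proof should be a direct application of Theorem \ref{main apli theo}. That theorem gives, for fixed $(\lambda_e,\lambda_n)$, the full list of pairs $(\lambda_e',\lambda_n')$ such that $H+V_{\lambda_e,\lambda_n}$ and $H+V_{\lambda_e',\lambda_n'}$ are unitarily equivalent through the family of unitaries constructed in Section \ref{s uni equiv}. Writing the transformation rule schematically in the form $\lambda_e'=F(\lambda_e,\lambda_n,t)$ and $\lambda_n'=G(\lambda_e,\lambda_n,t)$, where $t$ is the real parameter indexing the unitary family, my plan is to impose $\lambda_n'=0$ and then solve for $t$.

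The equation $G(\lambda_e,\lambda_n,t)=0$ is, in the generic situation, either unsolvable over the admissible range of $t$ or yields a $\lambda_e'$ that is not real; requiring a real admissible solution forces a compatibility condition on $(\lambda_e,\lambda_n)$, which the shape of the formulas in Theorem \ref{main apli theo} (together with the analogous computation carried out for Corollary \ref{coro1 inf}) predicts to be exactly $\lambda_e^2-\lambda_n^2=-4$. Under that identity the equation becomes a quadratic in $t$ with two real roots $t_+$ and $t_-$, accounting for the two choices of sign in the statement. Substituting $t_\pm$ back into $F$ then gives $\lambda_e'=(\pm 2\lambda_n-4)/\lambda_e$, where the hypothesis $\lambda_e\neq 0$ is used both to make $t_\pm$ well-defined and to keep the denominator of $\lambda_e'$ nonzero.

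The main obstacle is purely computational: on the hyperbola $\lambda_e^2-\lambda_n^2=-4$ several of the quantities appearing in Theorem \ref{main apli theo} degenerate simultaneously, and one must carry out the simplifications carefully. In particular, a naive specialization of Corollary \ref{coro1 inf} under the same constraint only yields the almost trivial map $(\lambda_e,\lambda_n)\mapsto(\lambda_e,-\lambda_n)$, so the non-trivial equivalence claimed in Corollary \ref{coro2 inf} must arise from a different branch of the parameter $t$ that is only admissible precisely on this hyperbola. Once the algebra is verified, no extra analytic input is required: the self-adjointness of both $H+V_{\lambda_e,\lambda_n}$ and $H+V_{(\pm 2\lambda_n-4)/\lambda_e,0}$ is automatic from Theorem \ref{t elec1}, and the unitary equivalence itself is delivered by Lemma \ref{lema elec} feeding into Theorem \ref{main apli theo}.
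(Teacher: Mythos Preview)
Your instinct to invoke Theorem \ref{main apli theo} is exactly what the paper does, and the argument is much shorter than you anticipate: the paper's entire proof of the corresponding Corollary \ref{coro2} is ``Apply Theorem \ref{main apli theo} taking $\theta=\pm\pi/2$.'' Under the hypothesis $\lambda_e^2-\lambda_n^2=-4$ one has $\lambda_e^2-\lambda_n^2+4=0$, so from \eqref{pre eq4} with $\theta=\pm\pi/2$ one reads off $\lambda_n'=0$ and $\gamma=\pm2\lambda_n-4$ immediately; then $\lambda_e^2-\lambda_n'^2=\lambda_e^2\neq0$ and the new electrostatic parameter is $\gamma\lambda_e/\lambda_e^2=(\pm2\lambda_n-4)/\lambda_e$. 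The residual hypotheses of Theorem \ref{main apli theo} (namely $\gamma\neq0$ and $\lambda_e^2\neq\gamma^2/4$) follow from $\lambda_e\neq0$ together with $\lambda_e^2-\lambda_n^2=-4$ by elementary algebra, since either failure would force $\lambda_n=\pm2$ and hence $\lambda_e=0$.

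Where your narrative goes wrong is the claim that solvability of $\lambda_n'=0$ \emph{forces} the compatibility condition $\lambda_e^2-\lambda_n^2=-4$. That is false: the equation $\lambda_n'=0$ from \eqref{pre eq4} reads $\lambda_n\cos\theta=\tfrac14(\lambda_e^2-\lambda_n^2+4)\sin\theta$, which for generic $(\lambda_e,\lambda_n)$ is solved by $\tan\theta=4\lambda_n/(\lambda_e^2-\lambda_n^2+4)$, and $\lambda_e'$ is always real --- this is precisely the content of Corollary \ref{coro3}. The hyperbola $\lambda_e^2-\lambda_n^2=-4$ is not where a solution first appears; it is where the generic formula degenerates (its denominator vanishes) and the solution collapses to $\cos\theta=0$. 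So there is no ``branch only admissible on this hyperbola'': the value $\theta=\pm\pi/2$ is available for every $(\lambda_e,\lambda_n)$, it simply does not kill $\lambda_n'$ unless $\lambda_e^2-\lambda_n^2+4=0$. The hypothesis is \emph{given}, not to be derived; just plug in $\theta=\pm\pi/2$ and verify.
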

This theorem may be seen as a particular instance of the following one (if we do not pay attention to the assumptions on $\lambda_e$ and $\lambda_n$, it would correspond to $\theta=\pm\pi/2$), but we stated it separately because its proof and conclusion are simpler.

\begin{theorem}\label{coro3 inf}
Let $\lambda_e,\lambda_n\in\R\setminus\{0\}$ be such that 
$|\lambda_e^2-\lambda_n^2|\neq0,4$,
\begin{equation}\label{extra assumpt}
\lambda_e^2-\lambda_n^2+2\lambda_e\neq4
\text{ and }\lambda_e^2-\lambda_n^2-2\lambda_e\neq4.
\end{equation}
Then there exists $\theta\in\R$ such that
$$H+V_{\lambda_e,\lambda_n}\text{ and }
H+V_{\left(2\lambda_n\frac{1+\cos\theta}{\sin\theta}-4\right)/\lambda_e,0}$$
are unitary equivalent self-adjoint operators.
\end{theorem}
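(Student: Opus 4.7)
The approach is to invoke Theorem \ref{main apli theo}, which from any admissible pair $(\lambda_e,\lambda_n)$ produces a one-parameter family $\{(\lambda_e'(\theta),\lambda_n'(\theta))\}_{\theta\in\R}$ of coefficients for which $H+V_{\lambda_e'(\theta),\lambda_n'(\theta)}$ is unitarily equivalent to $H+V_{\lambda_e,\lambda_n}$. The target of Theorem \ref{coro3 inf} is simply the distinguished member of this family for which $\lambda_n'(\theta)=0$. So the proof reduces to two steps: (i) solve the scalar equation $\lambda_n'(\theta)=0$ for $\theta$, and (ii) evaluate $\lambda_e'(\theta)$ at the resulting angle and check that it matches the claimed expression.

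In detail, I would first write out the explicit maps $\theta\mapsto(\lambda_e'(\theta),\lambda_n'(\theta))$ from Theorem \ref{main apli theo}; these are rational-trigonometric expressions in $\theta$ whose coefficients depend rationally on $(\lambda_e,\lambda_n)$, coming from the conjugation by the unitary family built in Section \ref{s uni equiv}. I would then solve $\lambda_n'(\theta)=0$; by the structure of the unitary transformation (a discontinuous gauge rotation by the angle $\theta$), this should reduce to a linear equation in $\cos\theta$ and $\sin\theta$. The hypothesis $|\lambda_e^2-\lambda_n^2|\neq0,4$ should guarantee that the equation has a real solution, while the two bounds in \eqref{extra assumpt} should rule out exactly the degenerate angles $\theta\in\{0,\pi\}$ (which would force $\sin\theta=0$ and make the factor $(1+\cos\theta)/\sin\theta$ undefined) and, combined with $\lambda_n\neq 0$, the separate case $\theta=\pm\pi/2$ already handled in Corollary \ref{coro2}.

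Having fixed such a $\theta$, I would substitute back and simplify $\lambda_e'(\theta)$ to the stated expression $\bigl(2\lambda_n(1+\cos\theta)/\sin\theta-4\bigr)/\lambda_e$. The half-angle identity $(1+\cos\theta)/\sin\theta=\cot(\theta/2)$ suggests that introducing $t=\tan(\theta/2)$ will make the algebra transparent, turning the whole computation into a rational manipulation; this is also consistent with the fact that the isospectral move $-4/\lambda$ from \eqref{xxx} is the $\theta=\pi$ limit of such a family. Self-adjointness of $H+V_{\lambda_e'(\theta),0}$ is automatic from Theorem \ref{main apli theo} and the self-adjointness domain for $H+V_{\lambda_e,\lambda_n}$ is described via Theorem \ref{t elec1}.

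The main obstacle will be the bookkeeping surrounding \eqref{extra assumpt}: I expect that the conditions $\lambda_e^2-\lambda_n^2\pm2\lambda_e\neq4$ are precisely what is needed for $\lambda_n'(\theta)=0$ to admit a solution with $\sin\theta\neq0$ and for the corresponding $\lambda_e'(\theta)$ to remain finite. Verifying that this identification is correct — and not, say, that some auxiliary degeneracy of the unitary conjugation creeps in — requires a careful case analysis of the trigonometric equation produced in step (i), together with tracking which values of $(\lambda_e,\lambda_n)$ are excluded a posteriori so that the image pair still satisfies the hypotheses of Theorem \ref{t elec1}.
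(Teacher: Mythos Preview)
Your overall plan matches the paper's proof: invoke Theorem~\ref{main apli theo} and choose $\theta$ so that $\lambda_n'=0$ in \eqref{pre eq4}, then read off the new electrostatic coupling as $\gamma/\lambda_e$ (a short computation using $\lambda_n'=0$ indeed collapses $\gamma$ to $2\lambda_n(1+\cos\theta)/\sin\theta-4$).

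One point needs correcting. The role of \eqref{extra assumpt} is \emph{not} to exclude $\theta\in\{0,\pi\}$: since $\lambda_n\neq0$ and $\lambda_e^2-\lambda_n^2\neq-4$, the equation $\lambda_n'=0$ reads $\tan\theta=4\lambda_n/(\lambda_e^2-\lambda_n^2+4)\neq0$, which already forces $\sin\theta\neq0$. What \eqref{extra assumpt} actually buys is this: once $\tan\theta$ is fixed there are two candidate angles, $\theta$ and $\theta+\pi$, and the remaining hypotheses of Theorem~\ref{main apli theo} (namely $\gamma\neq0$ and $\lambda_e^2-(\lambda_n')^2\neq\gamma^2/4$, which with $\lambda_n'=0$ amount to $\gamma\notin\{0,2\lambda_e,-2\lambda_e\}$) translate into three forbidden values for $\cos\theta$. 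Condition \eqref{extra assumpt} ensures these three values have pairwise distinct absolute values, so they cannot exclude both $\cos\theta$ and $-\cos\theta$ simultaneously; hence at least one of $\theta$, $\theta+\pi$ is admissible. This is the case analysis you anticipated at the end, just with a different mechanism than you guessed.
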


Concerning the assumptions in the theorems above, let us mention that 
$\lambda_e^2-\lambda_n^2\neq0,4$ is the only requirement that we need to show that $H+V_{\lambda_e,\lambda_n}$ is self-adjoint and to construct a unitary equivalent operator which can be understood as a coupling of $H$ with a pure electrostatic shell potential $V_{\lambda,0}$ for some $\lambda\in\R$ given by Theorems \ref{coro2 inf} and \ref{coro3 inf}. The extra assumption \eqref{extra assumpt} is only used to describe this new operator in the same vein as $H+V_{\lambda_e,\lambda_n}$ (and not by simply using the unitary transformations, that is, by defining it just as $U^{-1}(H+V_{\lambda_e,\lambda_n})U$ for some unitary map $U$).
In Figure \ref{fig1}, the black curves correspond to the points 
$(\lambda_e,\lambda_n)\in\R^2$ such that $\lambda_e^2-\lambda_n^2=0$ or $\lambda_e^2-\lambda_n^2=4$ and the red curves represent the points such that \eqref{extra assumpt} does not hold. 

\begin{figure}[ht]
\begin{center}
\scalebox{0.45}{\includegraphics{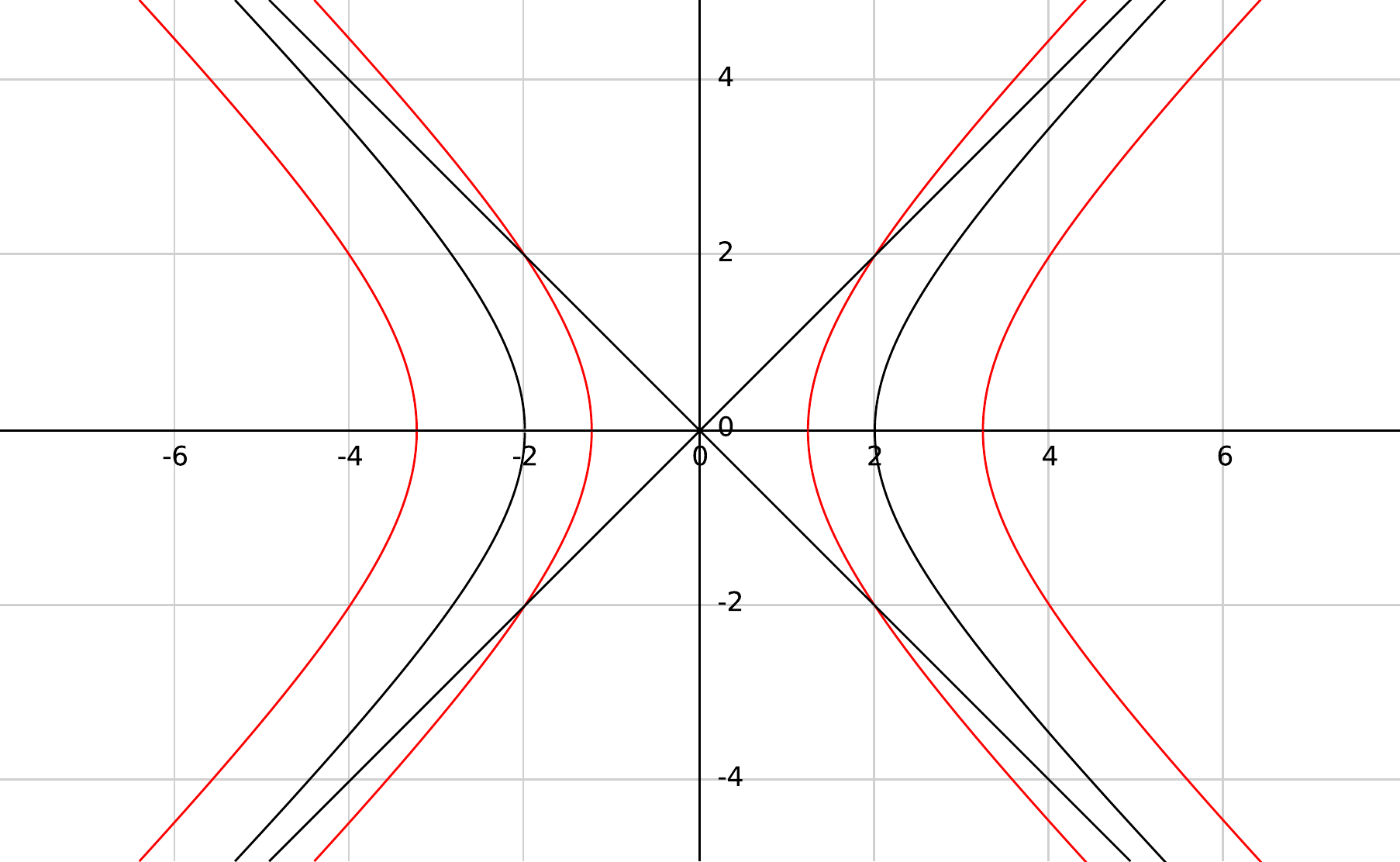}}
\caption{\label{fig1}Picture of some of the assumptions on $(\lambda_e,\lambda_n)\in\R^2$.}
\end{center}
\end{figure}

A final remark regarding the above-mentioned results is in order. Since we are mainly interested on couplings with non-trivial electrostatic shell potentials, some of the results in this paper assume that $\lambda_e\neq0$. The case of unitary equivalence between the free Dirac operator and its shell interactions with potentials of the type $V_{0,\lambda_n}$ must be treated separately. However, similar arguments to the ones in this article and in \cite{AMV1} work in that case (for instance, one should use \cite[Theorem 2.11$(i)$]{AMV1} with $\Lambda\equiv0$ instead of \cite[Theorem 2.11$(iii)$]{AMV1} to define $H+V_0\equiv H$ in Theorem \ref{pre t1} below). For shortness sake, we don't carry on the study of this particular case in this note.

Finally, in Section \ref{s nonc magnetic} we give an explicit construction of a self-adjoint operator that realizes the coupling given by \eqref{papapa}, first in the case that $\lambda\in\CC^1(\partial\Omega)$ does not vanish on $\partial\Omega$ (see Proposition \ref{propo magnetic}), and then in the general case (see Theorem \ref{t nonc magnetic}).

\section*{Acknowledgment}
The author gratefully acknowledges Eric S\'er\'e for useful conversations while preparing this article during a short stay at Universit\'e Paris-Dauphine, as well as the facilities provided by the center.

\section{Preliminaries}\label{s preli}
This article relies on \cite{AMV1,AMV2}, so we assume that the reader is familiar with the notation, methods and results in there. However, in this section we recall some basics of those articles for the reader's convenience.

A fundamental solution of $H$ is given by
$$\phi(x)=\frac{e^{-m|x|}}{4\pi|x|}\left(m\beta
+\left(1+m|x|\right)i\alpha\cdot\frac{x}{|x|^2}\right)\quad\text{for }x\in\R^3\setminus\{0\},$$
i.e., $H\phi=\delta_0$ in the sense of distributions, where $\delta_0$ denotes the Dirac measure centered at the origin. This fundamental solution is the first key ingredient for the developements below.

We denote by $\mu$ the Lebesgue measure in $\R^3$ and by $\sigma$ the surface measure on $\partial\Omega$. 
We define the auxiliar space of locally finite $\C^4$-valued Borel measures $$\XX=\left\{G\mu+g\sigma:\,G\in L^2(\mu)^4,\, g\in L^2(\sigma)^4\right\},$$
where, as usual, 
$L^2(\nu)^4=\left\{f:\R^3\to\C^4\text{ $\nu$-measurable}:\,
\int|f|^2\,d\nu<\infty\right\}$ for a given positive Borel measure $\nu$ in $\R^3$.
In what follows we use a nonstandard notation, $\Phi$, to define the convolution of measures in $\XX$ with the fundamental solution of $H$, $\phi$. Capital letters, as $F$ or $G$, in the argument of $\Phi$ denote elements of $L^2(\mu)^4$, and the lowercase letters, as $f$ or $g$, denote elements in $L^2(\sigma)^4$. 
Despite that this notation is nonstandard, it is very convenient in order to shorten the forthcoming computations.

Given $G\mu+g\sigma\in\XX$, we define $$\Phi(G+g)=
\phi*G\mu+\phi*g\sigma\in L^2(\mu)^4,$$
where, for a given locally finite $\C^4$-valued Borel measure $\nu$ in $\R^3$ and $x\in\R^3$, we have set $\phi*\nu(x)=\int\phi(x-y)\,d\nu(y)$ whenever the integral makes sense.
One can check that
$H(\Phi(G+g))=G\mu+g\sigma$ in the sense of distributions. This allows us to define a ``generic'' potential $V$ acting on functions $\varphi=\Phi(G+g)$ for $G\mu+g\sigma\in\XX$ by
$V(\varphi)= -g\sigma$, so that $(H+V)(\varphi)=G\mu$
in the sense of distributions. Recall that $G\in L^2(\mu)^4$ and $g\in L^2(\sigma)^4$ so, under this setting, we make the following abuse of notation:
\begin{equation}\label{defi V}
\begin{split}
&\text{For } \varphi=\Phi(G+g) \text{ with } G+g\in\XX,
\text{ set }V(\varphi)= -g\in L^2(\sigma)^4.\\ 
&\text{Then }H\varphi=G+g\in\XX\text{ and }(H+V)\varphi=G\in L^2(\mu)^4. 
\end{split}
\end{equation}
That is, we omit the underlying measures
$\mu$ and $\sigma$ associated to $G$ and $g$, respectively, in the distributional relations stated in \eqref{defi V}. In particular, we have $\Phi(\XX)\subset L^2(\mu)^4$ and $H+V:\Phi(\XX)\to L^2(\mu)^4$.
This abuse of notation will be systematically used throughout the article.

We also make use of the trace operator as follows. For $G\in\CC_c^\infty(\R^3)^4$, one defines the trace on ${\partial\Omega}$ by $\Tr_{\partial\Omega}(G)=G\chi_{{\partial\Omega}}$. Then, $\Tr_{\partial\Omega}$ extends to a bounded linear operator $\Tr_\sigma:W^{1,2}(\mu)^4\to L^2(\sigma)^4$, where $W^{1,2}(\mu)^4$ denotes the Sobolev space of $\C^4$-valued functions such that all its components have all its derivatives up to first order in $L^2(\mu)$. Since $\Phi(G)\in W^{1,2}(\mu)^4$ for all $G\in L^2(\mu)^4$, we can define $$\Phi_\sigma G=\Tr_\sigma(\Phi(G))\in L^2(\sigma)^4.$$
That $\Phi$, $V$ and $\Phi_\sigma$ are well defined and satisfy the above-mentioned properties is justified in \cite[Section 2.3]{AMV1}.

From the comments above and following \cite{AMV1}, we are ready to construct  domains where $H+V$ is self-adjoint. This is done with the help of auxiliar operators. More precisely, given a linear operator $\Lambda$ bounded in $L^2(\sigma)^4$, set 
\begin{equation}\label{def T_Lambda}
\begin{split}
&D(T_\Lambda)=\{\Phi(G+g): G+g\in\XX,\,\Phi_\sigma G =\Lambda g\}\subset L^2(\mu)^4,\\
&T_\Lambda=H+V:D(T_\Lambda)\to L^2(\mu)^4, \text{ where }V \text{ is as in }\eqref{defi V}.
\end{split}
\end{equation} 
The following theorem, which is a direct application of 
\cite[Theorem 2.11$(iii)$]{AMV1}, shows that $T_\Lambda$ given by \eqref{def T_Lambda} is self-adjoint under some assumptions on $\Lambda$.  
\begin{theorem}\label{pre t1}
Let $T_\Lambda$ be as in \eqref{def T_Lambda}. If $\Lambda$ is self-adjoint, $\Ran(\Lambda)$ is closed in $L^2(\sigma)^4$ and $\{{\Phi(g)}:\,g\in\Ker(\Lambda)\}$ is closed in $L^2(\mu)^4$, then $T_\Lambda$ is  self-adjoint. In particular, if $\Lambda$ is self-adjoint and Fredholm, then $T_\Lambda$ is self-adjoint.
\end{theorem}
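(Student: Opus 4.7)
The plan is to invoke the abstract self-adjointness machinery from \cite{AMV1} and check that the hypotheses on $\Lambda$ furnish exactly the inputs it requires. Concretely, I would first reconfirm that $T_\Lambda$ is well-posed on the domain $D(T_\Lambda)$ described in \eqref{def T_Lambda}: given $\varphi=\Phi(G+g)\in D(T_\Lambda)$, the distributional identity $H\Phi(G+g)=G+g$ and the definition $V\varphi=-g$ give $T_\Lambda\varphi=G\in L^2(\mu)^4$, so $T_\Lambda$ maps into $L^2(\mu)^4$ as claimed.

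Next I would prove symmetry. For $\varphi_i=\Phi(G_i+g_i)\in D(T_\Lambda)$ with $\Phi_\sigma G_i=\Lambda g_i$, I would compute
$$\langle T_\Lambda\varphi_1,\varphi_2\rangle_\mu-\langle\varphi_1,T_\Lambda\varphi_2\rangle_\mu
=\langle G_1,\Phi(G_2+g_2)\rangle_\mu-\langle\Phi(G_1+g_1),G_2\rangle_\mu,$$
and then use the mapping properties of $\Phi$ recalled in Section \ref{s preli} (in particular $\Phi_\sigma$ and its adjoint) to collapse this to a boundary pairing of the form $\langle\Phi_\sigma G_1,g_2\rangle_\sigma-\langle g_1,\Phi_\sigma G_2\rangle_\sigma$. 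Substituting the boundary condition and using that $\Lambda=\Lambda^\ast$ makes this expression vanish, which gives $T_\Lambda\subset T_\Lambda^\ast$.

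For the reverse inclusion $T_\Lambda^\ast\subset T_\Lambda$, the strategy is to start with $\psi\in D(T_\Lambda^\ast)$ and $F:=T_\Lambda^\ast\psi$, and to show that $\psi$ necessarily has the form $\Phi(F+h)$ for some $h\in L^2(\sigma)^4$ with $\Phi_\sigma F=\Lambda h$. The function $\psi-\Phi(F)$ is annihilated by $H$ away from $\partial\Omega$ in the distributional sense, which forces the ansatz $\psi=\Phi(F+h)$ for some $h\in L^2(\sigma)^4$; the required boundary condition $\Phi_\sigma F=\Lambda h$ is then extracted by pairing the defining identity $\langle\psi,T_\Lambda\varphi\rangle_\mu=\langle F,\varphi\rangle_\mu$ against arbitrary $\varphi\in D(T_\Lambda)$ and exploiting the same boundary computation as in the symmetry step.

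I expect the genuinely hard point to be this last inclusion, and specifically the existence of a valid $h$: one has to know that the candidate boundary datum determined by $\psi$ lies in $\Ran(\Lambda)$ (modulo the kernel), and then to make an $L^2(\sigma)^4$-choice of preimage. This is exactly where the hypotheses enter: closedness of $\Ran(\Lambda)$ allows an orthogonal decomposition $L^2(\sigma)^4=\Ran(\Lambda)\oplus\Ker(\Lambda)$ and thus a bounded pseudo-inverse on $\Ker(\Lambda)^\perp$, while closedness of $\{\Phi(g):g\in\Ker(\Lambda)\}$ ensures that the kernel component of the boundary datum is captured by an actual element of $D(T_\Lambda)$ rather than a limit outside it, ruling out any defect space. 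The final assertion is then immediate: if $\Lambda$ is Fredholm, then $\Ran(\Lambda)$ is closed by definition and $\{\Phi(g):g\in\Ker(\Lambda)\}$ is a finite-dimensional, hence closed, subspace of $L^2(\mu)^4$, so both structural assumptions are met and the general statement applies.
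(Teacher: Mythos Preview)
The paper does not give a proof of this theorem at all: it states the result as ``a direct application of \cite[Theorem 2.11$(iii)$]{AMV1}'' and moves on. Your opening sentence---invoke the abstract self-adjointness machinery from \cite{AMV1}---is therefore already the paper's entire argument, and at that level your proposal matches the paper exactly.

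Everything else in your proposal is a reconstruction of what \cite[Theorem 2.11]{AMV1} presumably proves, which the paper itself does not attempt. Your outline of symmetry via the boundary pairing and of the Fredholm reduction is correct in spirit. The one step I would flag as a genuine gap in your sketch is the representation claim in the reverse inclusion: from $\psi\in D(T_\Lambda^\ast)$ and $H(\psi-\Phi(F))=0$ distributionally on $\Omega_\pm$ you assert that $\psi-\Phi(F)=\Phi(h)$ for some $h\in L^2(\sigma)^4$. This is not automatic---an $L^2(\mu)^4$ function that is $H$-harmonic on each side of $\partial\Omega$ need not a priori be a single-layer potential with $L^2(\sigma)^4$ density; one needs trace/regularity input (or the specific structure of the domain of the adjoint built in \cite{AMV1}) to land in $\Phi(L^2(\sigma)^4)$. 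That is precisely the technical content packaged into \cite[Theorem 2.11]{AMV1}, and it is why the paper cites rather than reproves.
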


The next lemma describes the traces on $\partial\Omega$ of functions $\varphi\in\Phi(\XX)$, and it will be used in the sequel (see \cite[Lemma 3.3]{AMV1} for a proof).
\begin{lemma}\label{l jump}
Given $\varphi=\Phi(G+g)$ with $G+g\in \XX$ and $x\in{\partial\Omega}$, set 
\begin{equation*}
\begin{split}
\varphi_\pm(x)&=\Phi_\sigma G(x)+\textstyle{\lim_{\Omega_{\pm}\ni y\stackrel{nt}{\longrightarrow} x}\Phi(g)(y)}
\quad\text{and}\\
C_\sigma g(x)
&=\textstyle{\lim_{\epsilon\searrow0}\int_{|x-z|>\epsilon}\phi(x-z)g(z)\,d\sigma(z),}
\end{split}
\end{equation*}
where $\textstyle{\Omega_{\pm}\ni y\stackrel{nt}{\longrightarrow} x}$ means that $y\in\Omega_{\pm}$ tends to $x\in{\partial\Omega}$ non-tangentially. Then, $\varphi_\pm$ are well defined $\sigma$-a.e. \!on $\partial\Omega$, $C_\sigma$ is bounded and self-adjoint in $L^2(\sigma)^4$ and, furthermore,
\begin{itemize}
\item[$(i)$] $\varphi_\pm =\Phi_\sigma G+\left(\mp\frac{i}{2}\,(\alpha\cdot N)+C_\sigma\right)\!g\in L^2(\sigma)^4$,
\item[$(ii)$] $(C_\sigma(\alpha\cdot N))^2=-\frac{1}{4}$.
\end{itemize}
\end{lemma}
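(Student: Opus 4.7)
The proof will break into three tasks: mapping properties of $C_\sigma$, the jump relation (i), and the algebraic identity (ii). The unifying idea is to decompose the fundamental solution as $\phi = \phi_0 + \phi_1$, where $\phi_0(x) = \frac{i\alpha\cdot x}{4\pi|x|^3}$ is the massless principal part (a Riesz-type singular kernel composed with the Dirac matrices) and $\phi_1 = \phi - \phi_0$ satisfies $|\phi_1(x)|\le C|x|^{-1}$ near the origin, i.e.\ is weakly singular. The operator $C_\sigma$ splits correspondingly: the $\phi_1$-component is bounded (in fact compact) on $L^2(\sigma)^4$ by standard weakly singular integral arguments on the compact smooth surface $\partial\Omega$, while the $\phi_0$-component is a combination of Cauchy-type singular integrals whose principal value exists $\sigma$-a.e.\ and defines a bounded operator on $L^2(\sigma)^4$ by the Coifman--McIntosh--Meyer theorem. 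Self-adjointness of $C_\sigma$ then follows from the pointwise identity $\phi(x)^* = \phi(-x)$, which holds because $\beta$ and $\alpha_j$ are Hermitian while $i\alpha_j$ is anti-Hermitian, provided principal values are taken with symmetric truncations.

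For (i), I would split $\Phi(g)(y) = \int \phi(y-z)g(z)\,d\sigma(z)$ again via $\phi_0+\phi_1$. The $\phi_1$-contribution is continuous up to $\partial\Omega$ and thus produces no jump; its boundary value matches the corresponding piece of $C_\sigma g$. For the $\phi_0$-piece, componentwise Plemelj--Sokhotski applies: the non-tangential limit from $\Omega_\pm$ equals the principal value plus a boundary jump, and the geometric computation of $\lim_{y\to x^\pm}\int_{|y-z|<\varepsilon}\phi_0(y-z)\,g(z)\,d\sigma(z)$ on a $C^2$-surface yields exactly $\mp\tfrac{i}{2}(\alpha\cdot N(x))g(x)$. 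Summing with the (continuous) trace $\Phi_\sigma G(x)$ of the volume potential yields the formula for $\varphi_\pm$.

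For (ii), the plan is to exploit Green's representation for $\varphi=\Phi(g)$, which satisfies $H\varphi=0$ in each $\Omega_\pm$. Using $\nabla\chi_{\Omega_+}=-N\sigma$ distributionally gives
\begin{equation*}
H(\chi_{\Omega_+}\varphi) = i(\alpha\cdot N)\varphi_+\sigma,
\end{equation*}
hence $\chi_{\Omega_+}\varphi = \Phi(i(\alpha\cdot N)\varphi_+)$. Taking the trace from $\Omega_+$ via (i), with $\varphi_+ = C_+ g$ and $C_+ := -\tfrac{i}{2}(\alpha\cdot N)+C_\sigma$, produces the operator identity $C_+\,i(\alpha\cdot N)\,C_+ = C_+$. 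Expanding and using $(\alpha\cdot N)^2=I$ collapses this after a short computation to $C_\sigma(\alpha\cdot N)C_\sigma = -\tfrac{1}{4}(\alpha\cdot N)$; right-multiplication by $\alpha\cdot N$ then delivers $(C_\sigma(\alpha\cdot N))^2 = -\tfrac{1}{4}I$.

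The main obstacle I anticipate is the rigorous treatment of the non-tangential boundary limits for $\phi_0\ast(g\sigma)$: one needs the $\sigma$-a.e.\ existence of the principal value together with non-tangential maximal function estimates for Cauchy-type integrals on smooth surfaces, not merely $L^2$ convergence. Once these harmonic-analytic facts are in place, the algebraic derivation of (ii) from the Green's representation is straightforward.
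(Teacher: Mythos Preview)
The paper does not include its own proof of this lemma; it simply refers the reader to \cite[Lemma~3.3]{AMV1}. Your outline is correct and follows what is essentially the standard route for such boundary layer results for Dirac operators. The splitting $\phi=\phi_0+\phi_1$ with $\phi_0(x)=\frac{i\,\alpha\cdot x}{4\pi|x|^3}$ isolates the Calder\'on--Zygmund part, so boundedness of $C_\sigma$ on $L^2(\sigma)^4$ reduces to Coifman--McIntosh--Meyer for the Riesz-transform piece plus an elementary weakly singular estimate for the remainder; self-adjointness follows from the pointwise identity $\phi(x)^*=\phi(-x)$, exactly as you say. The Plemelj--Sokhotski argument for (i) is the standard one on a $C^2$ surface.

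Your derivation of (ii) via the reproducing formula $\chi_{\Omega_+}\Phi(g)=\Phi\big(i(\alpha\cdot N)\varphi_+\big)$ and the resulting Calder\'on projector identity $C_+\,i(\alpha\cdot N)\,C_+=C_+$, with $C_+=-\tfrac{i}{2}(\alpha\cdot N)+C_\sigma$, is correct: expanding with $(\alpha\cdot N)^2=I$ gives $C_\sigma(\alpha\cdot N)C_\sigma=-\tfrac14(\alpha\cdot N)$ and hence $(C_\sigma(\alpha\cdot N))^2=-\tfrac14$. The one technical point you flag---$\sigma$-a.e.\ existence of non-tangential limits together with non-tangential maximal estimates for $\Phi(g)$---is exactly the substantive analytic input, and for smooth bounded $\partial\Omega$ it reduces, after the $\phi_0/\phi_1$ splitting, to the classical theory for Riesz transforms on surfaces. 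With that in hand your argument goes through without obstruction.
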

Finally, recall that $(\alpha\cdot x)(\alpha\cdot x)=|x|^2$ for all $x\in\R^3$, thus $(\alpha\cdot N)^2=1$. Besides, the algebraic properties of the $\alpha_j$'s and $\beta$ also yield $\beta(\alpha\cdot N)=(\alpha\cdot N)\beta=0$. These are two basic facts that will also be used in what follows.

\section{Unitary equivalence}\label{s uni equiv} 
Given $T_\Lambda$ as in \eqref{def T_Lambda}, we are going to construct a family of unitary equivalent operators parametrized by the complex numbers with unit modulus. Given $z\in\C$ with $|z|=1$, define the operator $U_z:L^2(\mu)^4\to L^2(\mu)^4$ by $$U_z\varphi=(\chi_{\Omega_+}
+\overline{z}\chi_{\Omega_-})\varphi,$$
where $\chi_{\Omega_\pm}$ denotes the characteristic function of $\Omega_\pm$.
We see that $(U_z)^*=U_{\overline z}$ and $U_zU_{\overline{z}}=U_{\overline{z}}U_z=1$, so $U$ is unitary.
Given $T_\Lambda$ as in \eqref{def T_Lambda}, set 
\begin{equation}\label{uni equiv def}
(T_\Lambda)_z=U_{\overline z}T_\Lambda U_z\quad
\text{defined on}\quad D((T_\Lambda)_z)=U_{\overline z}D(T_\Lambda).
\end{equation} 
Then $T_\Lambda$ and $(T_\Lambda)_z$ are unitary equivalent operators. For the applications below, we want to find a description of $(T_\Lambda)_z$ and $D((T_\Lambda)_z)$ similar to the one of $T_\Lambda$ and $D(T_\Lambda)$ in \eqref{def T_Lambda}. This is precisely the purpose of the following lemma.

\begin{lemma}\label{lema elec}
Let $T_\Lambda$ be as in \eqref{def T_Lambda} and, for $z\in\C$ with $|z|=1$, let $(T_\Lambda)_z$ be the unitary equivalent operator given by \eqref{uni equiv def}. If there exists a linear operator $\Lambda_z$ bounded in $L^2(\sigma)^4$ such that the pair $(\Lambda,\Lambda_z)$ satisfies
\begin{equation}\label{uni equiv eq1}
\Lambda_z\bigg(\frac{1+z}{2}+(1-z)i(\alpha\cdot N)(\Lambda+C_\sigma)\bigg)=\bigg(\frac{1+z}{2}-(1-z)iC_\sigma(\alpha\cdot N)\bigg)\Lambda
\end{equation}
for $C_\sigma$ as in {\em Lemma \ref{l jump}},
then $(T_\Lambda)_z\subset T_{\Lambda_z}$, where $T_{\Lambda_z}$ is defined by \eqref{def T_Lambda}. 
\end{lemma}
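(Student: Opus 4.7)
The strategy is to pick an arbitrary $\psi = U_{\overline z}\varphi \in D((T_\Lambda)_z)$ with $\varphi = \Phi(G+g)$ and $\Phi_\sigma G = \Lambda g$, and to exhibit $\psi$ in the canonical form $\Phi(G'+g')$ satisfying $\Phi_\sigma G' = \Lambda_z g'$; then automatically $T_{\Lambda_z}\psi = G' = U_{\overline z}G = (T_\Lambda)_z\psi$, giving the inclusion.

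First I would compute $H\psi$ in the sense of distributions. Since $\nabla\chi_{\Omega_+} = -N\sigma$ and $\nabla\chi_{\Omega_-} = N\sigma$, the Leibniz rule (applied with the one-sided traces $\varphi_\pm$) yields $H(\chi_{\Omega_\pm}\varphi) = \chi_{\Omega_\pm}G\,\mu \mp i(\alpha\cdot N)\varphi_\pm\sigma$, and summing with weights $1$ and $z$ gives
\begin{equation*}
H\psi = U_{\overline z}G\,\mu + i(\alpha\cdot N)(\varphi_+ - z\varphi_-)\sigma.
\end{equation*}
Because $m>0$, $H$ has trivial $L^2(\mu)^4$-kernel, so the identity $H\psi = H\Phi(G'+g')$ forces $\psi = \Phi(G'+g')$ with $G' = U_{\overline z}G \in L^2(\mu)^4$ and $g' = i(\alpha\cdot N)(\varphi_+ - z\varphi_-) \in L^2(\sigma)^4$. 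Notice also that the nontangential traces of $\psi$ are $\psi_+ = \varphi_+$ and $\psi_- = z\varphi_-$.

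Next, Lemma \ref{l jump}(i) applied to $\psi$ itself yields $\Phi_\sigma G' = \tfrac{\varphi_++z\varphi_-}{2} - C_\sigma g'$. I would then substitute the formulas $\varphi_\pm = \Lambda g + (\mp\tfrac{i}{2}(\alpha\cdot N)+C_\sigma)g$ from Lemma \ref{l jump}(i) (using $\Phi_\sigma G = \Lambda g$) and rewrite $g'$ using $(\alpha\cdot N)^2 = 1$ as
\begin{equation*}
g' = \Bigl(\tfrac{1+z}{2} + i(1-z)(\alpha\cdot N)(\Lambda + C_\sigma)\Bigr)g.
\end{equation*}
Expanding $C_\sigma g'$, the term $C_\sigma(\alpha\cdot N)C_\sigma g$ is reshaped via the identity $C_\sigma(\alpha\cdot N)C_\sigma = -\tfrac14(\alpha\cdot N)$, obtained by multiplying the relation $(C_\sigma(\alpha\cdot N))^2 = -\tfrac14$ of Lemma \ref{l jump}(ii) on the right by $(\alpha\cdot N)$. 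This identity is precisely what is needed to cancel the $(\alpha\cdot N)g$ and $C_\sigma g$ contributions, leaving
\begin{equation*}
\Phi_\sigma G' = \Bigl(\tfrac{1+z}{2} - i(1-z)C_\sigma(\alpha\cdot N)\Bigr)\Lambda g.
\end{equation*}

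The conclusion is then immediate: the hypothesis \eqref{uni equiv eq1} applied to the explicit form of $g'$ above gives
\begin{equation*}
\Lambda_z g' = \Bigl(\tfrac{1+z}{2} - i(1-z)C_\sigma(\alpha\cdot N)\Bigr)\Lambda g = \Phi_\sigma G',
\end{equation*}
so $\psi \in D(T_{\Lambda_z})$ and $T_{\Lambda_z}\psi = G' = (T_\Lambda)_z\psi$, proving $(T_\Lambda)_z \subset T_{\Lambda_z}$. The only real subtlety is the distributional computation of $H(\chi_{\Omega_\pm}\varphi)$, where one must correctly pair the surface measure produced by $\nabla\chi_{\Omega_\pm}$ with the appropriate one-sided trace $\varphi_\pm$; after that, the argument is purely algebraic, driven by the key identity extracted from Lemma \ref{l jump}(ii).
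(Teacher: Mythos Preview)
Your argument is correct and follows essentially the same route as the paper: write $U_{\overline z}\varphi=\Phi(G'+g')$ with $G'=U_{\overline z}G$ and $g'=\big(\tfrac{1+z}{2}+i(1-z)(\alpha\cdot N)(\Lambda+C_\sigma)\big)g$, verify $\Phi_\sigma G'=\big(\tfrac{1+z}{2}-i(1-z)C_\sigma(\alpha\cdot N)\big)\Lambda g$, and conclude via \eqref{uni equiv eq1}. One small slip: your intermediate formula should read $H(\chi_{\Omega_\pm}\varphi)=\chi_{\Omega_\pm}G\,\mu\pm i(\alpha\cdot N)\varphi_\pm\sigma$ (sign $\pm$, not $\mp$); your displayed expression for $H\psi$ is nevertheless the correct one, so the error does not propagate. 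The only genuine difference from the paper is that you compute $\Phi_\sigma G'$ directly from Lemma~\ref{l jump} applied to $\psi$ together with the identity $C_\sigma(\alpha\cdot N)C_\sigma=-\tfrac14(\alpha\cdot N)$, whereas the paper obtains both $g'$ and $\Phi_\sigma G'$ by quoting \cite[(5.3),(5.4)]{AMV2}; your version is more self-contained but otherwise equivalent.
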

\begin{proof}
Let $\varphi=\Phi(G+g)\in D(T_\Lambda)$. Reasoning as in the proof of \cite[Lemma 5.1]{AMV2} and, more precisely, applying $\Phi$ to \cite[equation (5.4)]{AMV2}, we see that $$\chi_{\Omega_\pm}\varphi
=\Phi\bigg(\chi_{\Omega_\pm}G+\bigg(\frac{1}{2}\pm i(\alpha\cdot N)(\Lambda+C_\sigma)\bigg)g\bigg),$$ which, from the definition of $U_{\overline z}$, easily implies that 
\begin{equation}\label{uu eq1}
U_{\overline z}\varphi
=\Phi\bigg(U_{\overline z}G+\bigg(\frac{1+z}{2}+(1-z)i(\alpha\cdot N)(\Lambda+C_\sigma)\bigg)g\bigg),
\end{equation} 
and then, using \eqref{defi V}, $(H+V)U_{\overline z}\varphi=U_{\overline z}G.$
Moreover, by \eqref{uni equiv def} we see that $U_{\overline z}\varphi\in D((T_\Lambda)_z)$ and, in view of \eqref{def T_Lambda}, \eqref{defi V} and \eqref{uu eq1}, we get
\begin{equation}\label{uu eq2}
(T_\Lambda)_zU_{\overline z}\varphi=U_{\overline z}T_\Lambda U_zU_{\overline z}\varphi=U_{\overline z}T_\Lambda\varphi=
U_{\overline z}(H+V)\varphi=U_{\overline z}G=(H+V)U_{\overline z}\varphi.
\end{equation}

Combining \cite[equation (5.3)]{AMV2} with the fact that $\Phi_\sigma G=\Lambda g$ by \eqref{def T_Lambda}, we deduce that $$\Phi_\sigma(\chi_{\Omega_\pm}G)=\bigg(\frac{1}{2}\mp iC_\sigma (\alpha\cdot N)\bigg)\Lambda g,$$
and hence 
\begin{equation}\label{uu eq3}
\Phi_\sigma(U_{\overline z}G)=\Phi_\sigma(\chi_{\Omega_+}G)
+z\Phi_\sigma(\chi_{\Omega_-}G)
=\bigg(\frac{1+z}{2}-(1-z)iC_\sigma(\alpha\cdot N)\bigg)\Lambda g.
\end{equation} 

Finally, assume that \eqref{uni equiv eq1} holds for some $\Lambda_z$ bounded in $L^2(\sigma)^4$.
Then, by setting $$F=U_{\overline z}G\in L^2(\mu)^4\quad
\text{and}\quad f=\bigg(\frac{1+z}{2}+(1-z)i(\alpha\cdot N)(\Lambda+C_\sigma)\bigg)g\in L^2(\sigma)^4,$$
a combination of \eqref{uu eq1}, \eqref{uu eq3} and \eqref{uni equiv eq1} shows that
$$D((T_\Lambda)_z)\subset\big\{\Phi(F+f): F+f\in\XX,\,\Phi_\sigma F =\Lambda_z f\big\},$$ and \eqref{uu eq2} gives $(T_\Lambda)_z=H+V$ on $D((T_\Lambda)_z).$ Using \eqref{def T_Lambda}, these last conclusions mean that $(T_\Lambda)_z\subset T_{\Lambda_z}$, and the lemma is proved. 
\end{proof}

\section{Electrostatic shell potentials and gauge transformations}\label{s applications}
In this section we study the unitary equivalence developed in Section \ref{s uni equiv} applied to couplings of $H$ with the perturbed and unperturbed electrostatic shell potentials associated to $\partial\Omega$ which were presented in the introduction. As a first step, we deal with the self-adjoint character of such concrete couplings via the following theorem.
\begin{theorem}\label{t elec1}
Given $\lambda_e,\lambda_n\in\R$ such that $\lambda_e^2-\lambda_n^2\neq0,4$ and $C_\sigma$ as in {\em Lemma \ref{l jump}}, set
\begin{equation}\label{pre eq1}
\Lambda_{\lambda_e,\lambda_n}=\frac{\lambda_n(\alpha\cdot N)-\lambda_e}{\lambda_e^2-\lambda_n^2}-C_\sigma.
\end{equation}
Then $T_{\Lambda_{\lambda_e,\lambda_n}}$ given by \eqref{def T_Lambda} is self-adjoint and $T_{\Lambda_{\lambda_e,\lambda_n}}=H+V_{\lambda_e,\lambda_n}$ on $D(T_{\Lambda_{\lambda_e,\lambda_n}})$, where $V_{\lambda_e,\lambda_n}$ is defined, for $\varphi\in \Phi(\XX)$ and $\varphi_\pm$ as in {\em Lemma \ref{l jump}}, by
\begin{equation}\label{pre eq2}
V_{\lambda_e,\lambda_n}\varphi=\{\lambda_e+\lambda_n(\alpha\cdot N)\}\frac{\varphi_++\varphi_-}{2}.
\end{equation} 
\end{theorem}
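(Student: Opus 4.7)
The plan is to apply Theorem \ref{pre t1} to $\Lambda=\Lambda_{\lambda_e,\lambda_n}$, and then use Lemma \ref{l jump} together with the algebraic identity $(\alpha\cdot N)^2=1$ to identify the resulting operator $T_\Lambda$ with $H+V_{\lambda_e,\lambda_n}$ on $D(T_\Lambda)$. Setting $M=\lambda_e+\lambda_n(\alpha\cdot N)$, the factorization $(\lambda_e+\lambda_n(\alpha\cdot N))(\lambda_e-\lambda_n(\alpha\cdot N))=\lambda_e^2-\lambda_n^2$ gives $M^{-1}=(\lambda_e-\lambda_n(\alpha\cdot N))/(\lambda_e^2-\lambda_n^2)$ as a bounded pointwise Hermitian multiplication operator (this is where the hypothesis $\lambda_e^2-\lambda_n^2\neq 0$ is used), so $\Lambda=-M^{-1}-C_\sigma$. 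Since $-M^{-1}$ has real coefficients and $(\alpha\cdot N)$ is pointwise self-adjoint, and $C_\sigma$ is self-adjoint by Lemma \ref{l jump}, $\Lambda$ is self-adjoint on $L^2(\sigma)^4$.

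To apply Theorem \ref{pre t1} in its Fredholm form I would show that $1+MC_\sigma$ is Fredholm, which is equivalent to Fredholmness of $\Lambda=-M^{-1}(1+MC_\sigma)$ since $M$ is bounded invertible. The main algebraic ingredients are the identity $(C_\sigma(\alpha\cdot N))^2=-1/4$ from Lemma \ref{l jump} and the compactness of the commutators $[C_\sigma,\alpha\cdot N]$ and $[C_\sigma,M]$ on $L^2(\sigma)^4$, which follows from the smoothness of $\partial\Omega$ and the Calder\'on--Zygmund nature of $C_\sigma$. Combining these gives $C_\sigma^2=(C_\sigma(\alpha\cdot N))((\alpha\cdot N)C_\sigma)\equiv(C_\sigma(\alpha\cdot N))^2=-1/4$ modulo compact, from which a direct computation yields
\[
(1+MC_\sigma)(1-C_\sigma M)\equiv 1-MC_\sigma^2 M\equiv 1+M^2/4=\frac{4+M^2}{4}
\]
modulo compact, and the analogous relation for the opposite product. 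Since $4+M^2$ has eigenvalues $4+(\lambda_e\pm\lambda_n)^2>0$ and is therefore pointwise invertible as a multiplier, $1+MC_\sigma$ has a two-sided parametrix modulo compact and is Fredholm. The hypothesis $\lambda_e^2-\lambda_n^2\neq 4$ should play its role here (or in the description of $\Ker\Lambda$), and carrying out this step rigorously in line with the parallel development in \cite{AMV1} is the main technical obstacle of the proof.

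Once $\Lambda$ is known to be self-adjoint and Fredholm, Theorem \ref{pre t1} delivers that $T_\Lambda$ is self-adjoint. To identify it with $H+V_{\lambda_e,\lambda_n}$, take $\varphi=\Phi(G+g)\in D(T_\Lambda)$, so that $\Phi_\sigma G=\Lambda g$ by \eqref{def T_Lambda}. Lemma \ref{l jump}$(i)$ then gives
\[
\frac{\varphi_++\varphi_-}{2}=\Phi_\sigma G+C_\sigma g=(\Lambda+C_\sigma)g=\frac{\lambda_n(\alpha\cdot N)-\lambda_e}{\lambda_e^2-\lambda_n^2}\,g,
\]
and, multiplying by $\lambda_e+\lambda_n(\alpha\cdot N)$, the factorization $(\lambda_e+\lambda_n(\alpha\cdot N))(\lambda_n(\alpha\cdot N)-\lambda_e)=\lambda_n^2-\lambda_e^2$ yields $V_{\lambda_e,\lambda_n}\varphi=-g=V(\varphi)$. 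Combined with $(H+V)\varphi=G$ from \eqref{defi V}, this shows $T_\Lambda=H+V_{\lambda_e,\lambda_n}$ on $D(T_\Lambda)$, completing the proof.
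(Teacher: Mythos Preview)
Your identification step and the verification that $\Lambda$ is bounded and self-adjoint are correct and coincide with the paper's argument. The genuine gap is in the Fredholm step: the compactness you invoke is for the wrong object, and this propagates a sign error.

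For the Dirac matrices one has the Clifford relation $(\alpha\cdot a)(\alpha\cdot b)+(\alpha\cdot b)(\alpha\cdot a)=2(a\cdot b)I_4$, while the principal part of the kernel of $C_\sigma$ is $\frac{i}{4\pi}\,\alpha\cdot\frac{x-z}{|x-z|^3}$. It is therefore the \emph{anticommutator} $\{C_\sigma,\alpha\cdot N\}$ whose kernel gains one order (it becomes $O(|x-z|^{-1})$ on the smooth surface $\partial\Omega$) and which is compact on $L^2(\sigma)^4$; the commutator $[C_\sigma,\alpha\cdot N]$ remains a genuine singular integral and is \emph{not} compact. In particular $(\alpha\cdot N)C_\sigma\equiv -C_\sigma(\alpha\cdot N)$ modulo compacts, so
\[
C_\sigma^2=C_\sigma(\alpha\cdot N)\,(\alpha\cdot N)C_\sigma
\equiv -\big(C_\sigma(\alpha\cdot N)\big)^2=+\tfrac14
\]
modulo compacts, with the opposite sign to what you wrote. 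Redoing your computation with these corrections gives $(1+MC_\sigma)(1-C_\sigma M)=1+\lambda_n[\alpha\cdot N,C_\sigma]-MC_\sigma^2 M$, and the surviving commutator term is not compact, so no parametrix is produced this way (and the condition $\lambda_e^2-\lambda_n^2\neq4$ never appears, as you noticed).

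The paper repairs this by pairing $\Lambda_{\lambda_e,\lambda_n}=-\Lambda_+$, where $\Lambda_+=M^{-1}+C_\sigma$, not with $1-C_\sigma M$ but with
\[
\Lambda_-=\frac{\lambda_n(\alpha\cdot N)+\lambda_e}{\lambda_e^2-\lambda_n^2}-C_\sigma,
\]
i.e.\ the same expression with the sign of $\lambda_n$ flipped. This choice is designed so that the cross terms combine into anticommutators, and one obtains
\[
\Lambda_+\Lambda_-=\frac{1}{\lambda_e^2-\lambda_n^2}-\frac14
+\bigg(\frac{\lambda_n}{\lambda_e^2-\lambda_n^2}-C_\sigma(\alpha\cdot N)\bigg)\{C_\sigma,\alpha\cdot N\}.
\]
Since $\{C_\sigma,\alpha\cdot N\}$ is compact, this is a Fredholm operator precisely when $\frac{1}{\lambda_e^2-\lambda_n^2}\neq\frac14$, i.e.\ when $\lambda_e^2-\lambda_n^2\neq4$; hence $\Lambda_\pm$ (and thus $\Lambda_{\lambda_e,\lambda_n}$) are Fredholm. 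This is exactly where the hypothesis $\lambda_e^2-\lambda_n^2\neq4$ enters, answering the question you left open.
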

Note that we are omitting the underlying measure $\sigma$ in the definition of $V_{\lambda_e,\lambda_n}$ in \eqref{pre eq2}, as we already did in \eqref{defi V} for the generic potential $V$. We will keep this abuse of notation in the definition of all shell potentials appearing throughout the article.

\begin{proof}
We first prove that $V=V_{\lambda_e,\lambda_n}$ on $D(T_{\Lambda_{\lambda_e,\lambda_n}})$, which would imply that $T_{\Lambda_{\lambda_e,\lambda_n}}=H+V_{\lambda_e,\lambda_n}$ on $D(T_{\Lambda_{\lambda_e,\lambda_n}})$.
Let $\varphi=\Phi(G+g)\in D(T_{\Lambda_{\lambda_e,\lambda_n}})$. Then $\varphi_\pm=\Phi_\sigma G +(\mp\frac{i}{2}(\alpha\cdot N)+C_\sigma)g$ by Lemma \ref{l jump}$(i)$, which yields $$\frac{\varphi_++\varphi_-}{2}=\Phi_\sigma G +C_\sigma g.$$ Recall that $\Phi_\sigma G=\Lambda_{\lambda_e,\lambda_n}g$ by \eqref{def T_Lambda}, hence
\begin{equation*}
\begin{split}
V_{\lambda_e,\lambda_n}\varphi
&=\{\lambda_e+\lambda_n(\alpha\cdot N)\}(\Phi_\sigma G +C_\sigma g)\\
&=\{\lambda_e+\lambda_n(\alpha\cdot N)\}\frac{\lambda_n(\alpha\cdot N)-\lambda_e}{\lambda_e^2-\lambda_n^2}\,g=-g=V\varphi,
\end{split}
\end{equation*}
by \eqref{defi V}, and we are done.

The first statement in the theorem follows by Theorem \ref{pre t1}, as far as we check that $\Lambda_{\lambda_e,\lambda_n}$ is self-adjoint and Fredholm. 
Since $\lambda_e,\lambda_n\in\R$, the self-adjointness of $\Lambda_{\lambda_e,\lambda_n}$ follows from the one of $C_\sigma$ and $\alpha\cdot N$. It remains to show that  $\Lambda_{\lambda_e,\lambda_n}$ is Fredholm if $\lambda_e^2-\lambda_n^2\neq4$, which follows by arguments similar to the ones in the proof of \cite[Theorem 5.5]{AMV2}. Set 
\begin{equation*}
\Lambda_\pm=\frac{\mp\lambda_n(\alpha\cdot N)+\lambda_e}{\lambda_e^2-\lambda_n^2}\pm C_\sigma.
\end{equation*} 
Note that $C_\sigma^2=C_\sigma(\alpha\cdot N)(\alpha\cdot N)C_\sigma=C_\sigma(\alpha\cdot N)\{C_\sigma,\alpha\cdot N\}+1/4$ by Lemma \ref{l jump}$(ii)$, where $\{C_\sigma,\alpha\cdot N\}$ denotes the anticommutator $C_\sigma(\alpha\cdot N)+(\alpha\cdot N)C_\sigma$, so we get
\begin{equation*}
\begin{split}
\Lambda_+\Lambda_-=\frac{1}{\lambda_e^2-\lambda_n^2}-\frac{1}{4}+\bigg(\frac{\lambda_n}{\lambda_e^2-\lambda_n^2}-C_\sigma(\alpha\cdot N)\bigg)\{C_\sigma,\alpha\cdot N\}.
\end{split}
\end{equation*} 
Since $\partial\Omega$ is smooth, $\{C_\sigma,\alpha\cdot N\}$ is a compact operator in $L^2(\sigma)^4$ by \cite[Lemma 3.5]{AMV1}. Hence, $\Lambda_-\Lambda_+=\Lambda_+\Lambda_-$ is Fredholm for $\lambda_e^2-\lambda_n^2\neq4$, and thus $\Lambda_\pm$ are also Fredholm operators by \cite[Theorem 1.46$(iii)$]{Aiena}. Since $\Lambda_{\lambda_e,\lambda_n}=-\Lambda_+$, we get that $\Lambda_{\lambda_e,\lambda_n}$ is Fredholm, and the proof of the theorem is complete thanks to Theorem \ref{pre t1}.
\end{proof}

\begin{lemma}\label{propo elec}
Let $\lambda_e,\lambda_n\in\R$ be such that $\lambda_e^2-\lambda_n^2\neq0$. If $\theta\in\R$ is such that 
\begin{equation}\label{invert l}
\frac{1}{2}(\lambda_e^2-\lambda_n^2+4)(1+\cos\theta)-4
+2\lambda_n\sin\theta\neq0,
\end{equation}
then the pair $(\Lambda_{\lambda_e,\lambda_n},(\Lambda_{\lambda_e,\lambda_n})_z)$ satisfies \eqref{uni equiv eq1}, where $z=e^{i\theta}$, $\Lambda_{\lambda_e,\lambda_n}$ is defined by \eqref{pre eq1} and
\begin{equation}\label{pre eq3}
\begin{split}
(\Lambda_{\lambda_e,\lambda_n})_z&=\frac{\big(\lambda_n\cos\theta
-\frac{1}{4}(\lambda_e^2-\lambda_n^2+4)
\sin\theta\big)(\alpha\cdot N)-\lambda_e}
{\frac{1}{2}(\lambda_e^2-\lambda_n^2+4)(1+\cos\theta)-4
+2\lambda_n\sin\theta}-C_\sigma.
\end{split}
\end{equation}
\end{lemma}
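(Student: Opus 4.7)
The key observation is that $\Lambda_{\lambda_e,\lambda_n} + C_\sigma = (\lambda_n(\alpha\cdot N) - \lambda_e)/(\lambda_e^2 - \lambda_n^2)$ is a scalar linear combination of $\Id$ and $\alpha\cdot N$, and that $(\Lambda_{\lambda_e,\lambda_n})_z + C_\sigma$ from \eqref{pre eq3} has the same structure. Abbreviating $X := \alpha\cdot N$ and $Y := C_\sigma$, and setting
\[
A := \frac{1+z}{2} + (1-z)iX\bigl(\Lambda_{\lambda_e,\lambda_n} + C_\sigma\bigr), \qquad B := \frac{1+z}{2} - (1-z)iYX,
\]
the claim \eqref{uni equiv eq1} becomes the operator identity $(\Lambda_{\lambda_e,\lambda_n})_z\, A = B\, \Lambda_{\lambda_e,\lambda_n}$. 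The two algebraic relations used throughout are $X^2 = 1$ and $YXY = -X/4$; the latter follows from $(YX)^2 = -1/4$ (Lemma~\ref{l jump}(ii)) by right-multiplication by $X$.

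Using $X^2 = 1$, a short computation gives $A = P + QX$ with
\[
P = \frac{1+z}{2} + \frac{(1-z)i\lambda_n}{\lambda_e^2-\lambda_n^2}, \qquad Q = -\frac{(1-z)i\lambda_e}{\lambda_e^2-\lambda_n^2}.
\]
Writing $z = e^{i\theta}$ and using $1+z = 2e^{i\theta/2}\cos(\theta/2)$ and $(1-z)i = 2e^{i\theta/2}\sin(\theta/2)$, one checks that $P^2 - Q^2 = e^{i\theta}D/(\lambda_e^2-\lambda_n^2)$, where $D$ is precisely the denominator in \eqref{pre eq3}. Hence assumption \eqref{invert l} (i.e.\ $D \neq 0$) is equivalent to the invertibility of $A$, with $A^{-1} = (P - QX)/(P^2 - Q^2)$.

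Next, I would expand both $B\,\Lambda_{\lambda_e,\lambda_n}$ and $(\Lambda_{\lambda_e,\lambda_n})_z\, A$ into the form $a_0\Id + a_1 X + a_2 Y + a_3 YX$ using $X^2 = 1$ and $YXY = -X/4$. By construction, the coefficients of $Y$ and $YX$ in the two expansions match automatically (they recover the values of $P$ and $Q$ already identified). The content of \eqref{uni equiv eq1} therefore reduces to two scalar identities, coming from the coefficients of $\Id$ and $X$; setting $\rho := \lambda_n\cos\theta - \frac{1}{4}(\lambda_e^2-\lambda_n^2+4)\sin\theta$, these read
\[
\rho Q - \lambda_e P = -\frac{(1+z)\lambda_e D}{2(\lambda_e^2-\lambda_n^2)}, \qquad \rho P - \lambda_e Q = D\left(\frac{(1+z)\lambda_n}{2(\lambda_e^2-\lambda_n^2)} - \frac{(1-z)i}{4}\right).
\]

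Both identities can be verified directly by passing to half-angle form. For the first, a rearrangement reduces it to $(1+z)(D - (\lambda_e^2-\lambda_n^2)) = 2(1-z)i(\rho + \lambda_n)$; using the factorization $\rho + \lambda_n = \cos(\theta/2)\bigl[2\lambda_n\cos(\theta/2) - \frac{1}{2}(\lambda_e^2-\lambda_n^2+4)\sin(\theta/2)\bigr]$, together with the analogous half-angle simplification of $D - (\lambda_e^2-\lambda_n^2)$, both sides collapse to the same expression after extracting the common factor $2\sin(\theta/2)\cos(\theta/2)e^{i\theta/2}$. The second identity is verified by an entirely analogous manipulation. The main obstacle is clerical rather than conceptual: keeping the expansion manageable requires passing to half-angle variables early and grouping terms so that the factorizations of $\rho$ and of $D$ in the half-angle variables become transparent, after which the trigonometric cancellations are automatic.
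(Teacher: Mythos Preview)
Your proposal is correct. The algebraic ingredients are the same as the paper's---the relations $(\alpha\cdot N)^2=1$ and $(C_\sigma(\alpha\cdot N))^2=-1/4$, together with the observation that $A=\frac{1+z}{2}+(1-z)i(\alpha\cdot N)(\Lambda_{\lambda_e,\lambda_n}+C_\sigma)$ is a scalar combination $P+Q(\alpha\cdot N)$---but the organization differs. The paper \emph{derives} the formula for $(\Lambda_{\lambda_e,\lambda_n})_z$: it right-multiplies \eqref{uni equiv eq1} by the conjugate $P-Q(\alpha\cdot N)$, which turns the left factor into the scalar $P^2-Q^2$ and (after a cancellation) also eliminates the $C_\sigma(\alpha\cdot N)$ term on the right, so that $(\Lambda_{\lambda_e,\lambda_n})_z$ can be read off directly after dividing by $z$ and passing to trigonometric form. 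You instead \emph{verify} the identity: you plug the given formula \eqref{pre eq3} into both sides, expand in the four operators $\Id,\,\alpha\cdot N,\,C_\sigma,\,C_\sigma(\alpha\cdot N)$, observe that the $C_\sigma$ and $C_\sigma(\alpha\cdot N)$ coefficients match by inspection, and then check the remaining two scalar identities via half-angle formulas. The paper's route is slightly cleaner conceptually (it explains where \eqref{pre eq3} comes from and avoids the half-angle bookkeeping you flag as ``clerical''), while yours is a legitimate direct verification that makes the role of the invertibility hypothesis \eqref{invert l} via $P^2-Q^2=e^{i\theta}D/(\lambda_e^2-\lambda_n^2)$ equally transparent. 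One small remark: your rearrangement of the first scalar identity divides by $\lambda_e$, so you should note separately that when $\lambda_e=0$ that identity is trivially $0=0$.
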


\begin{proof}
For shortness sake, set $$\lambda_1=\frac{-\lambda_e}{\lambda_e^2-\lambda_n^2}
\quad\text{and}\quad
\lambda_2=\frac{\lambda_n}{\lambda_e^2-\lambda_n^2},
\quad\text{so}\quad
\lambda_1^2-\lambda_2^2=\frac{1}{\lambda_e^2-\lambda_n^2}.$$
Then $\Lambda_{\lambda_e,\lambda_n}=\lambda_1+\lambda_2(\alpha\cdot N)-C_\sigma$. 
We are going to find the formula for $(\Lambda_{\lambda_e,\lambda_n})_z$ by working with the pair $(\Lambda_{\lambda_e,\lambda_n},(\Lambda_{\lambda_e,\lambda_n})_z)$ in \eqref{uni equiv eq1}. On one hand, 
\begin{equation}\label{elec eq1}
\begin{split}
\frac{1+z}{2}+(1-z)i(\alpha\cdot N)(\Lambda_{\lambda_e,\lambda_n}+C_\sigma)
&=\frac{1+z}{2}+(1-z)i(\alpha\cdot N)(\lambda_1+\lambda_2(\alpha\cdot N))\\
&=\frac{1+z}{2}+\lambda_2(1-z)i+\lambda_1(1-z)i(\alpha\cdot N).
\end{split}
\end{equation}
On the other hand, using Lemma \ref{l jump}$(ii)$,
\begin{equation}\label{elec eq2}
\begin{split}
\bigg(\frac{1+z}{2}-(1-&z)iC_\sigma(\alpha\cdot N)\bigg)\Lambda_{\lambda_e,\lambda_n}\\
&=\bigg(\frac{1+z}{2}-(1-z)iC_\sigma(\alpha\cdot N)\bigg)(\lambda_1+\lambda_2(\alpha\cdot N)-C_\sigma)\\
&=\lambda_1\frac{1+z}{2}+\bigg(\lambda_2\frac{1+z}{2}-\frac{1-z}{4}i\bigg)(\alpha\cdot N)\\
&\quad-\lambda_1(1-z)iC_\sigma(\alpha\cdot N)-
\bigg(\frac{1+z}{2}+\lambda_2(1-z)i\bigg)C_\sigma.
\end{split}
\end{equation} 
Note also that 
\begin{equation}\label{elec eq3}
\begin{split}
\bigg(\frac{1+z}{2}+\lambda_2(1-z)i+\lambda_1(1-z&)i(\alpha\cdot N)\bigg)
\bigg(\frac{1+z}{2}+\lambda_2(1-z)i-\lambda_1(1-z)i(\alpha\cdot N)\bigg)\\
&=\frac{(1+z)^2}{4}+(\lambda_1^2-\lambda_2^2)(1-z)^2+\lambda_2(1-z^2)i.
\end{split}
\end{equation}
If we apply the operator $\frac{1+z}{2}+\lambda_2(1-z)i-\lambda_1(1-z)i(\alpha\cdot N)$ from the right in \eqref{uni equiv eq1} to the pair $(\Lambda_{\lambda_e,\lambda_n},(\Lambda_{\lambda_e,\lambda_n})_z)$, using \eqref{elec eq1}, \eqref{elec eq3} and \eqref{elec eq2}, we deduce that
\begin{equation}\label{aa eq1}
\begin{split}
\bigg(\frac{(1+z)^2}{4}&+(\lambda_1^2-\lambda_2^2)(1-z)^2+\lambda_2(1-z^2)i\bigg)(\Lambda_{\lambda_e,\lambda_n})_z\\
&=\bigg\{\lambda_1\frac{1+z}{2}+\bigg(\lambda_2\frac{1+z}{2}-\frac{1-z}{4}i\bigg)(\alpha\cdot N)
-\lambda_1(1-z)iC_\sigma(\alpha\cdot N)\\
&\quad-\bigg(\frac{1+z}{2}+\lambda_2(1-z)i\bigg)C_\sigma\bigg\}
\bigg\{\frac{1+z}{2}+\lambda_2(1-z)i-\lambda_1(1-z)i(\alpha\cdot N)\bigg\}\\
&=\lambda_1z+\bigg\{\bigg(\lambda_2^2-\lambda_1^2-\frac{1}{4}\bigg)\frac{1-z^2}{2}i+\lambda_2\frac{1+z^2}{2}\bigg\}(\alpha\cdot N)\\
&\quad-\bigg(\frac{(1+z)^2}{4}+(\lambda_1^2-\lambda_2^2)(1-z)^2+\lambda_2(1-z^2)i\bigg)C_\sigma.
\end{split}
\end{equation} 
Since $|z|=1$, we can divide \eqref{aa eq1} by $z$ to get
\begin{equation}\label{aa eq2}
\begin{split}
\bigg(\frac{(1+z)^2}{4z}&+(\lambda_1^2-\lambda_2^2)\frac{(1-z)^2}{z}+\lambda_2\frac{1-z^2}{z}i\bigg)(\Lambda_{\lambda_e,\lambda_n})_z\\
&=\lambda_1+\bigg\{\bigg(\lambda_2^2-\lambda_1^2-\frac{1}{4}\bigg)\frac{1-z^2}{2z}i+\lambda_2\frac{1+z^2}{2z}\bigg\}(\alpha\cdot N)\\
&\quad-\bigg(\frac{(1+z)^2}{4z}+(\lambda_1^2-\lambda_2^2)\frac{(1-z)^2}{z}+\lambda_2\frac{1-z^2}{z}i\bigg)C_\sigma.
\end{split}
\end{equation} 
For $z=e^{i\theta}$ with $\theta\in\R$, we easily see that
$$\frac{(1\pm z)^2}{z}=2(\cos\theta\pm1),
\quad\frac{1-z^2}{z}i=2\sin\theta,
\quad\frac{1+z^2}{z}=2\cos\theta.$$
Therefore, \eqref{aa eq2} is equivalent to
\begin{equation*}
\begin{split}
\bigg\{1+2(\cos\theta-&1)\bigg(\lambda_1^2-\lambda_2^2+\frac{1}{4}\bigg)+2\lambda_2\sin\theta\bigg\}(\Lambda_{\lambda_e,\lambda_n})_z\\
&=\lambda_1+\bigg\{\lambda_2\cos\theta-\bigg(\lambda_1^2-\lambda_2^2+\frac{1}{4}\bigg)\sin\theta\bigg\}(\alpha\cdot N)\\
&\quad-\bigg\{1+2(\cos\theta-1)\bigg(\lambda_1^2-\lambda_2^2+\frac{1}{4}\bigg)+2\lambda_2\sin\theta\bigg\}C_\sigma
\end{split}
\end{equation*}
which, in terms of $\lambda_e$ and $\lambda_n$ (using that $\lambda_e^2-\lambda_n^2\neq0$), corresponds to
\begin{equation}\label{aa eq3}
\begin{split}
\bigg\{\frac{1}{2}(\lambda_e^2-\lambda_n^2&+4)(1+\cos\theta)-4
+2\lambda_n\sin\theta\bigg\}(\Lambda_{\lambda_e,\lambda_n})_z\\
&=-\lambda_e+\bigg\{\lambda_n\cos\theta
-\frac{1}{4}(\lambda_e^2-\lambda_n^2+4)
\sin\theta\bigg\}(\alpha\cdot N)\\
&\quad-\bigg\{\frac{1}{2}
(\lambda_e^2-\lambda_n^2+4)(1+\cos\theta)-4
+2\lambda_n\sin\theta\bigg\}C_\sigma,
\end{split}
\end{equation}
The lemma follows by \eqref{aa eq3} and \eqref{invert l} because all the computations above can be reverted if $\lambda_e^2-\lambda_n^2\neq0$ and \eqref{invert l} holds.
\end{proof}

Note that, from \eqref{pre eq3}, $(\Lambda_{\lambda_e,\lambda_n})_1=\Lambda_{\lambda_e,\lambda_n}$ for all $\lambda_e^2-\lambda_n^2\neq0$. Of course, this comes as no surprise because 
$(T_{\Lambda_{\lambda_e,\lambda_n}})_1=T_{\Lambda_{\lambda_e,\lambda_n}}$ by \eqref{uni equiv def}, since $U_1=1$. 

\begin{theorem}\label{main apli theo}
Given $\lambda_e,\lambda_n,\theta\in\R$ set 
\begin{equation}\label{pre eq4}
\begin{split}
\gamma&=\frac{1}{2}(\lambda_e^2-\lambda_n^2+4)(1+\cos\theta)
-4+2\lambda_n\sin\theta,\\
\lambda_n'&=\lambda_n\cos\theta
-\frac{1}{4}(\lambda_e^2-\lambda_n^2+4)
\sin\theta.
\end{split}
\end{equation}
Assume that $\lambda_e^2-\lambda_n^2\neq 0,4$, that $\gamma\neq0$ and that $\lambda_e^2-\lambda_n'^2\neq 0,\gamma^2/4$. Then $$T_{\Lambda_{\lambda_e,\lambda_n}}\text{ and } T_{\Lambda_{\gamma\lambda_e/(\lambda_e^2-\lambda_n'^2),
\gamma\lambda_n'/(\lambda_e^2-\lambda_n'^2)}},\text{ both defined by \eqref{def T_Lambda} and \eqref{pre eq1}},$$ are unitary equivalent self-adjoint operators. Moreover, 
\begin{equation*}
\begin{split}
T_{\Lambda_{\lambda_e,\lambda_n}}&=H+V_{\lambda_e,\lambda_n}\text{ and}\\
T_{\Lambda_{\gamma\lambda_e/(\lambda_e^2-\lambda_n'^2),
\gamma\lambda_n'/(\lambda_e^2-\lambda_n'^2)}}
&=H+V_{\gamma\lambda_e/(\lambda_e^2-\lambda_n'^2),
\gamma\lambda_n'/(\lambda_e^2-\lambda_n'^2)}
\end{split}
\end{equation*} 
on $D\big(T_{\Lambda_{\lambda_e,\lambda_n}}\big)$ and $D\big(T_{\Lambda_{\gamma\lambda_e/(\lambda_e^2-\lambda_n'^2),
\gamma\lambda_n'/(\lambda_e^2-\lambda_n'^2)}}\big)$, respectively, where $$V_{\lambda_e,\lambda_n}\text{ and } V_{\gamma\lambda_e/(\lambda_e^2-\lambda_n'^2),
\gamma\lambda_n'/(\lambda_e^2-\lambda_n'^2)}\text{ are defined by \eqref{pre eq2}}.$$
\end{theorem}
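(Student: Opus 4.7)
The plan is to package Theorem \ref{t elec1} (which supplies self-adjointness and the identification $T_\Lambda = H+V$) together with Lemmas \ref{lema elec} and \ref{propo elec} (which together supply the unitary equivalence) for the specific choice $z=e^{i\theta}$. Writing $a=\gamma\lambda_e/(\lambda_e^2-\lambda_n'^2)$ and $b=\gamma\lambda_n'/(\lambda_e^2-\lambda_n'^2)$, a short computation gives $a^2-b^2=\gamma^2/(\lambda_e^2-\lambda_n'^2)$, so the three non-degeneracy assumptions of the theorem ($\lambda_e^2-\lambda_n^2\neq 0,4$; $\gamma\neq 0$; $\lambda_e^2-\lambda_n'^2\neq 0,\gamma^2/4$) are exactly what is needed to apply Theorem \ref{t elec1} simultaneously to $T_{\Lambda_{\lambda_e,\lambda_n}}$ and to $T_{\Lambda_{a,b}}$. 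This already yields self-adjointness of both operators and the identifications $T_{\Lambda_{\lambda_e,\lambda_n}}=H+V_{\lambda_e,\lambda_n}$ and $T_{\Lambda_{a,b}}=H+V_{a,b}$ on their respective domains.

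For the equivalence itself, I would set $z=e^{i\theta}$ and observe that the condition $\gamma\neq 0$ is precisely \eqref{invert l}, so Lemma \ref{propo elec} applies and gives that the pair $(\Lambda_{\lambda_e,\lambda_n},(\Lambda_{\lambda_e,\lambda_n})_z)$ with $(\Lambda_{\lambda_e,\lambda_n})_z$ as in \eqref{pre eq3} satisfies the intertwining identity \eqref{uni equiv eq1}. Lemma \ref{lema elec} then gives the inclusion $(T_{\Lambda_{\lambda_e,\lambda_n}})_z \subset T_{(\Lambda_{\lambda_e,\lambda_n})_z}$, while by construction \eqref{uni equiv def}, $(T_{\Lambda_{\lambda_e,\lambda_n}})_z$ is tautologically unitary equivalent to $T_{\Lambda_{\lambda_e,\lambda_n}}$.

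The step that ties the two halves together is the purely algebraic identity $(\Lambda_{\lambda_e,\lambda_n})_z=\Lambda_{a,b}$. This is verified by factoring $\gamma/(\lambda_e^2-\lambda_n'^2)$ out of the numerator and denominator in the formula for $\Lambda_{a,b}$ obtained from \eqref{pre eq1}: one checks that $(b(\alpha\cdot N)-a)/(a^2-b^2)$ collapses to $(\lambda_n'(\alpha\cdot N)-\lambda_e)/\gamma$, which is exactly the coefficient in \eqref{pre eq3}, and the $-C_\sigma$ term matches on both sides. Substituting this identity into the previous paragraph upgrades the inclusion to $(T_{\Lambda_{\lambda_e,\lambda_n}})_z\subset T_{\Lambda_{a,b}}$; since both sides are self-adjoint and a self-adjoint operator admits no proper self-adjoint extension, the inclusion must in fact be an equality. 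Combined with the tautological unitary equivalence between $T_{\Lambda_{\lambda_e,\lambda_n}}$ and $(T_{\Lambda_{\lambda_e,\lambda_n}})_z$, this delivers the desired conclusion. I do not expect any genuine obstacle here; all the computational work is absorbed into Lemma \ref{propo elec}, and the only item requiring care is the bookkeeping of the non-degeneracy hypotheses so that every quotient makes sense and so that the self-adjointness argument that promotes $\subset$ to $=$ actually applies.
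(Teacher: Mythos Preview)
Your proposal is correct and follows essentially the same route as the paper's own proof: apply Theorem~\ref{t elec1} to both operators, invoke Lemma~\ref{propo elec} (via $\gamma\neq0$) and Lemma~\ref{lema elec} to obtain the inclusion $(T_{\Lambda_{\lambda_e,\lambda_n}})_z\subset T_{\Lambda_{a,b}}$ after the algebraic identification $(\Lambda_{\lambda_e,\lambda_n})_z=\Lambda_{a,b}$, and then upgrade the inclusion to equality by the maximality of self-adjoint operators. The only cosmetic difference is notation ($a,b$ versus the paper's $\lambda_1,\lambda_2$) and the order in which the ingredients are invoked.
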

\begin{proof}
The theorem follows by a combination of Theorem \ref{t elec1} and Lemmata \ref{propo elec} and \ref{lema elec}. 
On one hand, since $\lambda_e^2-\lambda_n^2\neq 0,4$, from Theorem \ref{t elec1} we see that $T_{\Lambda_{\lambda_e,\lambda_n}}$ is self-adjoint and $T_{\Lambda_{\lambda_e,\lambda_n}}=H+V_{\lambda_e,\lambda_n}$ on
$D(T_{\Lambda_{\lambda_e,\lambda_n}})$. 

On the other hand, that $\gamma\neq0$ means that \eqref{invert l} holds. Hence, Lemma \ref{propo elec} shows that the pair $(\Lambda_{\lambda_e,\lambda_n},(\Lambda_{\lambda_e,\lambda_n})_z)$ satisfies \eqref{uni equiv eq1}, where $z=e^{i\theta}$ and $(\Lambda_{\lambda_e,\lambda_n})_z$ is defined by \eqref{pre eq3}. 
Since $\lambda_e^2-\lambda_n'^2\neq 0$, we can set
$$\lambda_1=\frac{\gamma\lambda_e}{\lambda_e^2-\lambda_n'^2}\quad\text{and}\quad
\lambda_2=\frac{\gamma\lambda_n'}{\lambda_e^2-\lambda_n'^2},\quad\text{so}\quad \lambda_1^2-\lambda_2^2=\frac{\gamma^2}{\lambda_e^2-\lambda_n'^2}.$$ Then,
from \eqref{pre eq3}, \eqref{pre eq4} and \eqref{pre eq1}, we easily get
\begin{equation*}
\begin{split}
(\Lambda_{\lambda_e,\lambda_n})_z
&=\frac{\lambda_n'(\alpha\cdot N)-\lambda_e}{\gamma}-C_\sigma
=\frac{\lambda_2(\alpha\cdot N)-\lambda_1}{\lambda_1^2-\lambda_2^2}-C_\sigma\\
&=\Lambda_{\lambda_1,\lambda_2}
=\Lambda_{\gamma\lambda_e/(\lambda_e^2-\lambda_n'^2),\gamma\lambda_n'/(\lambda_e^2-\lambda_n'^2)}.
\end{split}
\end{equation*}
Now, since the pair $(\Lambda_{\lambda_e,\lambda_n},(\Lambda_{\lambda_e,\lambda_n})_z)$ satisfies \eqref{uni equiv eq1}, Lemma \ref{lema elec} yields that 
$$(T_{\Lambda_{\lambda_e,\lambda_n}})_z\subset T_{(\Lambda_{\lambda_e,\lambda_n})_z}
=T_{\Lambda_{\gamma\lambda_e/(\lambda_e^2-\lambda_n'^2),\gamma\lambda_n'/(\lambda_e^2-\lambda_n'^2)}}.$$
Besides, Theorem \ref{t elec1} shows that $T_{\Lambda_{\gamma\lambda_e/(\lambda_e^2-\lambda_n'^2),\gamma\lambda_n'/(\lambda_e^2-\lambda_n'^2)}}$ is self-adjoint whenever 
$$\bigg(\frac{\gamma\lambda_e}{\lambda_e^2-\lambda_n'^2}\bigg)^2-
\bigg(\frac{\gamma\lambda_n'}{\lambda_e^2-\lambda_n'^2}\bigg)^2\neq0,4,$$
and this last relation holds if $\gamma\neq0$ and $\lambda_e^2-\lambda_n'^2\neq0,\gamma^2/4$. In conclusion, from the assumptions in the statement of the theorem, we have proven that $$(T_{\Lambda_{\lambda_e,\lambda_n}})_z\subset 
T_{\Lambda_{\gamma\lambda_e/(\lambda_e^2-\lambda_n'^2),\gamma\lambda_n'/(\lambda_e^2-\lambda_n'^2)}}$$ and that both operators are self-adjoint (the first one due to the fact that, by construction, it is unitary equivalent to $T_{\Lambda_{\lambda_e,\lambda_n}}$, which is self-adjoint). Therefore, both operators coincide, and thus $T_{\Lambda_{\lambda_e,\lambda_n}}$ and $T_{\Lambda_{\gamma\lambda_e/(\lambda_e^2-\lambda_n'^2),\gamma\lambda_n'/(\lambda_e^2-\lambda_n'^2)}}$ are unitary equivalent, as claimed. Finally, that
\begin{equation*}
\begin{split}
T_{\Lambda_{\gamma\lambda_e/(\lambda_e^2-\lambda_n'^2),
\gamma\lambda_n'/(\lambda_e^2-\lambda_n'^2)}}
=H+V_{\gamma\lambda_e/(\lambda_e^2-\lambda_n'^2),
\gamma\lambda_n'/(\lambda_e^2-\lambda_n'^2)}
\end{split}
\end{equation*}
on $D\big(T_{\Lambda_{\gamma\lambda_e/(\lambda_e^2-\lambda_n'^2),
\gamma\lambda_n'/(\lambda_e^2-\lambda_n'^2)}}\big)$ is also given by Theorem \ref{t elec1}. 
\end{proof}

The following corollaries are the applications of Theorem \ref{main apli theo} mentioned in the introduction. Theorems \ref{coro1 inf}, \ref{coro2 inf} and \ref{coro3 inf} correspond to informal statemets of Corollaries \ref{coro1}, \ref{coro2} and \ref{coro3}, respectively. 

\begin{corollary}\label{coro1}
Let $\lambda_e,\lambda_n\in\R$ be such that $\lambda_e^2-\lambda_n^2\neq 0,4$. 
Then $$T_{\Lambda_{\lambda_e,\lambda_n}}\text{ and } T_{\Lambda_{-4\lambda_e/(\lambda_e^2-\lambda_n^2),
4\lambda_n/(\lambda_e^2-\lambda_n^2)}},\text{ both defined by \eqref{def T_Lambda} and \eqref{pre eq1}},$$ are unitary equivalent self-adjoint operators. Moreover, $T_{\Lambda_{\lambda_e,\lambda_n}}=H+V_{\lambda_e,\lambda_n}$ and
\begin{equation*}
\begin{split}
T_{\Lambda_{-4\lambda_e/(\lambda_e^2-\lambda_n^2),
4\lambda_n/(\lambda_e^2-\lambda_n^2)}}
=H+V_{-4\lambda_e/(\lambda_e^2-\lambda_n^2),
4\lambda_n/(\lambda_e^2-\lambda_n^2)}
\end{split}
\end{equation*}
on $D(T_{\Lambda_{\cdot,\cdot}})$, where $V_{\cdot,\cdot}$ is given by \eqref{pre eq2}.
\end{corollary}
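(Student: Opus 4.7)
The plan is to deduce this corollary as a direct specialization of Theorem \ref{main apli theo}, by making a single canonical choice of the angle parameter $\theta$. Looking at the formulas \eqref{pre eq4}, the target parameters $(-4\lambda_e/(\lambda_e^2-\lambda_n^2),\,4\lambda_n/(\lambda_e^2-\lambda_n^2))$ differ from the originals $(\lambda_e,\lambda_n)$ essentially by a multiplicative factor of $-4$ on the numerator of the first component and a flip of sign on $\lambda_n$. This strongly suggests taking $\theta=\pi$, i.e.\ $z=e^{i\theta}=-1$, which should make $\sin\theta=0$ kill all the mixing terms and reduce $\cos\theta$ to $-1$.

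First I would substitute $\cos\theta=-1$ and $\sin\theta=0$ into \eqref{pre eq4}, yielding immediately
$$\gamma=\tfrac{1}{2}(\lambda_e^2-\lambda_n^2+4)(1-1)-4+0=-4
\quad\text{and}\quad
\lambda_n'=-\lambda_n.$$
Next, I would check the three hypotheses of Theorem \ref{main apli theo} under the sole assumption $\lambda_e^2-\lambda_n^2\neq 0,4$ of the corollary. The condition $\lambda_e^2-\lambda_n^2\neq 0,4$ is precisely the hypothesis; $\gamma=-4\neq 0$ is automatic; and since $\lambda_n'^2=\lambda_n^2$ and $\gamma^2/4=4$, the condition $\lambda_e^2-\lambda_n'^2\neq 0,\gamma^2/4$ again collapses to $\lambda_e^2-\lambda_n^2\neq 0,4$. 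So all hypotheses hold without any extra assumption.

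Finally I would substitute $\gamma=-4$ and $\lambda_n'=-\lambda_n$ into the conclusion of Theorem \ref{main apli theo}: the new first index becomes $\gamma\lambda_e/(\lambda_e^2-\lambda_n'^2)=-4\lambda_e/(\lambda_e^2-\lambda_n^2)$ and the new second index becomes $\gamma\lambda_n'/(\lambda_e^2-\lambda_n'^2)=(-4)(-\lambda_n)/(\lambda_e^2-\lambda_n^2)=4\lambda_n/(\lambda_e^2-\lambda_n^2)$, matching the corollary's formulas exactly. The identifications $T_{\Lambda_{\cdot,\cdot}}=H+V_{\cdot,\cdot}$ on the respective domains are inherited verbatim from the corresponding clause in Theorem \ref{main apli theo}. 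There is no real obstacle here: the whole argument is a one-line specialization $\theta=\pi$, the interest being mainly that this is the canonical choice which, for $\lambda_n=0$, recovers the classical isospectral involution $\lambda\mapsto-4/\lambda$ of \eqref{xxx}.
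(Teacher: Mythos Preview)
Your proposal is correct and follows exactly the paper's own proof, which consists of the single sentence ``Apply Theorem \ref{main apli theo} taking $\theta=\pi$.'' You have simply spelled out the verifications that the paper leaves implicit.
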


\begin{proof}
Apply Theorem \ref{main apli theo} taking $\theta=\pi$.
\end{proof}

In particular, from Corollary \ref{coro1} we get that $H+V_{\lambda,0}$ and $H+V_{-4/\lambda,0}$ (defined on $D(T_{\Lambda_{\lambda,0}})$ and $D(T_{\Lambda_{-4/\lambda,0}})$, respectively) are unitary equivalent self-adjoint operators for all $\lambda^2\neq0,4$, which strengthens the first conclusion in \cite[Theorem 3.3]{AMV2}. As a byproduct, we also get that $H+V_{0,\lambda}$ and $H+V_{0,-4/\lambda}$ are unitary equivalent self-adjoint operators for all $\lambda\neq0$.

\begin{corollary}\label{coro2}
Let $\lambda_e,\lambda_n\in\R$ be such that $\lambda_e\neq0$ and $\lambda_e^2-\lambda_n^2=-4$. 
Then $$T_{\Lambda_{\lambda_e,\lambda_n}}\text{ and } T_{\Lambda_{(\pm2\lambda_n-4)/\lambda_e,0}},\text{ all of them defined by \eqref{def T_Lambda} and \eqref{pre eq1}},$$ are unitary equivalent self-adjoint operators. Moreover, $T_{\Lambda_{\lambda_e,\lambda_n}}=H+V_{\lambda_e,\lambda_n}$ and
\begin{equation*}
\begin{split}
T_{\Lambda_{(\pm2\lambda_n-4)/\lambda_e,0}}
=H+V_{(\pm2\lambda_n-4)/\lambda_e,0}
\end{split}
\end{equation*}
on $D(T_{\Lambda_{\cdot,\cdot}})$, where $V_{\cdot,\cdot}$ is given by \eqref{pre eq2}.
\end{corollary}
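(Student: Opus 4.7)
The plan is to obtain the corollary as a direct specialization of Theorem \ref{main apli theo}, choosing the angle $\theta=\pm\pi/2$. The crucial observation is that the hypothesis $\lambda_e^2-\lambda_n^2=-4$ forces the factor $\lambda_e^2-\lambda_n^2+4$ appearing throughout \eqref{pre eq4} to vanish, which dramatically simplifies the expressions for $\gamma$ and $\lambda_n'$.

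First I would substitute $\theta=\pm\pi/2$ (so $\cos\theta=0$ and $\sin\theta=\pm1$) together with the identity $\lambda_e^2-\lambda_n^2+4=0$ into \eqref{pre eq4}. This immediately yields
$$\gamma=-4\pm2\lambda_n\quad\text{and}\quad \lambda_n'=0.$$
Because $\lambda_n'=0$, the denominator $\lambda_e^2-\lambda_n'^2$ reduces to $\lambda_e^2$, and the first parameter of the transformed coupling takes the form $\gamma\lambda_e/\lambda_e^2=(\pm2\lambda_n-4)/\lambda_e$, while the second parameter vanishes. Thus Theorem \ref{main apli theo} predicts a unitary equivalence with precisely $T_{\Lambda_{(\pm2\lambda_n-4)/\lambda_e,0}}$, matching the statement, and the identification with $H+V_{(\pm2\lambda_n-4)/\lambda_e,0}$ is automatic via the same theorem.

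The second step is to verify the three hypotheses of Theorem \ref{main apli theo}. The condition $\lambda_e^2-\lambda_n^2\neq0,4$ is immediate from $\lambda_e^2-\lambda_n^2=-4$. For $\gamma\neq0$, note that $-4\pm2\lambda_n=0$ would force $\lambda_n=\pm2$, but then $\lambda_e^2=\lambda_n^2-4=0$, contradicting $\lambda_e\neq0$. Finally, $\lambda_e^2-\lambda_n'^2\neq0$ is just $\lambda_e^2\neq0$, while $\lambda_e^2\neq\gamma^2/4=(\lambda_n\mp2)^2$ combined with $\lambda_e^2=(\lambda_n-2)(\lambda_n+2)$ again reduces to $\lambda_n\neq\pm2$, already ruled out. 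Hence all hypotheses of Theorem \ref{main apli theo} are satisfied.

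There is really no obstacle here: the proof is a matter of choosing the right angle and carrying out a clean substitution, the hypothesis $\lambda_e^2-\lambda_n^2=-4$ being exactly what is needed to kill one term and collapse the formulas. The only minor bookkeeping point is to keep the two sign choices $\theta=+\pi/2$ and $\theta=-\pi/2$ consistent with the $\pm$ that appears in the target parameter $(\pm2\lambda_n-4)/\lambda_e$; this correspondence is transparent since it comes solely from $\sin(\pm\pi/2)=\pm1$.
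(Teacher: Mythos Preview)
Your proposal is correct and follows exactly the paper's approach: the paper's proof consists of the single line ``Apply Theorem \ref{main apli theo} taking $\theta=\pm\pi/2$,'' and you have simply spelled out the substitution and the verification of the three hypotheses in detail. Everything checks out, including the sign bookkeeping and the reductions via $\lambda_e^2=\lambda_n^2-4$.
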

\begin{proof}
Apply Theorem \ref{main apli theo} taking $\theta=\pm\pi/2$.
\end{proof}

\begin{corollary}\label{coro3}
Let $\lambda_e,\lambda_n\in\R\setminus\{0\}$ be such that
$|\lambda_e^2-\lambda_n^2|\neq0,4$. Assume that
\begin{equation}\label{cond exist theta}
\lambda_e^2-\lambda_n^2+2\lambda_e\neq4
\quad\text{and}\quad\lambda_e^2-\lambda_n^2-2\lambda_e\neq4.
\end{equation}
Then, there exists $\theta\in\R$ so that
\begin{equation}\label{cond theta}
\begin{split}
\tan\theta&=\frac{4\lambda_n}{\lambda_e^2-\lambda_n^2+4},\quad\text{and}\\
\cos\theta&\neq\frac{16-(\lambda_e^2-\lambda_n^2)^2}
{16+(\lambda_e^2-\lambda_n^2)^2+8(\lambda_e^2+\lambda_n^2)},
\frac{16-(\lambda_e^2-\lambda_n^2)^2
\pm4\lambda_e(\lambda_e^2-\lambda_n^2+4)}
{16+(\lambda_e^2-\lambda_n^2)^2+8(\lambda_e^2+\lambda_n^2)}.
\end{split}
\end{equation}

For any $\theta\in\R$ as in \eqref{cond theta},
$$T_{\Lambda_{\lambda_e,\lambda_n}}\text{ and } T_{\Lambda_{\left(2\lambda_n\frac{1+\cos\theta}{\sin\theta}-4\right)/\lambda_e,0}},\text{ both defined by \eqref{def T_Lambda} and \eqref{pre eq1}},$$ are unitary equivalent self-adjoint operators. Moreover, $T_{\Lambda_{\lambda_e,\lambda_n}}=H+V_{\lambda_e,\lambda_n}$ and
\begin{equation*}
\begin{split}
T_{\Lambda_{\left(2\lambda_n\frac{1+\cos\theta}{\sin\theta}-4\right)/\lambda_e,0}}
=H+V_{\left(2\lambda_n\frac{1+\cos\theta}{\sin\theta}-4\right)/\lambda_e,0}
\end{split}
\end{equation*}
on $D(T_{\Lambda_{\cdot,\cdot}})$, where $V_{\cdot,\cdot}$ is given by \eqref{pre eq2}.
\end{corollary}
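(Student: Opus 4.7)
The plan is to apply Theorem \ref{main apli theo} with a specific $\theta$ chosen so that the second coupling constant $\lambda_n'$ of the transformed operator vanishes, thereby reducing the perturbed coupling to a pure electrostatic one. Looking at \eqref{pre eq4}, the equation $\lambda_n' = 0$ reads $\lambda_n\cos\theta = \frac{1}{4}(\lambda_e^2-\lambda_n^2+4)\sin\theta$, i.e., $\tan\theta = 4\lambda_n/(\lambda_e^2-\lambda_n^2+4)$. Since $\lambda_n \neq 0$ and $\lambda_e^2-\lambda_n^2 \neq -4$ (because $|\lambda_e^2-\lambda_n^2|\neq 4$), the right-hand side is a well-defined real number, and the equation has solutions in $[0,2\pi)$, necessarily a pair differing by $\pi$; this corresponds exactly to the two choices $(\cos\theta,\sin\theta) = \pm(A,B)/\sqrt{A^2+B^2}$, with $A=\lambda_e^2-\lambda_n^2+4$ and $B=4\lambda_n$.

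Once the tangent condition is imposed, I compute $\gamma$ from \eqref{pre eq4}. Using $(\lambda_e^2-\lambda_n^2+4)\sin\theta=4\lambda_n\cos\theta$ to eliminate the $\frac{1}{2}A(1+\cos\theta)$ term algebraically, a short manipulation (based on $\sin^2\theta+\cos^2\theta=1$) yields $\gamma = 2\lambda_n(1+\cos\theta)/\sin\theta - 4$, matching the stated formula. Then with $\lambda_n'=0$ the second operator in Theorem \ref{main apli theo} becomes $T_{\Lambda_{\gamma/\lambda_e,\,0}}$, which is exactly the operator whose unitary equivalence to $T_{\Lambda_{\lambda_e,\lambda_n}}$ we want. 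To invoke Theorem \ref{main apli theo} one must check its hypotheses: $\lambda_e^2-\lambda_n^2\neq 0,4$ (given), $\gamma\neq 0$, and $\lambda_e^2-\lambda_n'^2\neq 0,\gamma^2/4$, which with $\lambda_n'=0$ and $\lambda_e\neq 0$ reduce to $\gamma\neq 0,\pm 2\lambda_e$.

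The remaining task, and the real content of the argument, is to show that the forbidden values of $\cos\theta$ listed in \eqref{cond theta} are exactly the three values for which $\gamma=0$, $\gamma=2\lambda_e$, and $\gamma=-2\lambda_e$ respectively, and then to argue that under the standing hypotheses at least one of the two candidate values $\pm A/\sqrt{A^2+B^2}$ avoids this forbidden set. Using the identity $A(8-A)=16-(\lambda_e^2-\lambda_n^2)^2$ and $A^2+B^2=16+(\lambda_e^2-\lambda_n^2)^2+8(\lambda_e^2+\lambda_n^2)$, I translate each $\gamma$-equation into the corresponding explicit rational expression for $\cos\theta$. The main obstacle will be the bookkeeping in the sign analysis: showing that both candidate $\cos\theta$ values coincide with a forbidden value only in boundary-case configurations excluded by $|\lambda_e^2-\lambda_n^2|\neq 0,4$ together with \eqref{cond exist theta} — the latter precisely ruling out the remaining degeneracies where $w_0$ equals $-w_\pm$, which is the only way both signs could simultaneously hit the forbidden set once the algebraic identities are unraveled. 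Granted this, at least one admissible $\theta$ exists; for any such $\theta$ the hypotheses of Theorem \ref{main apli theo} are met, and the conclusion, together with the identification $T_{\Lambda_{\cdot,\cdot}} = H+V_{\cdot,\cdot}$ from Theorem \ref{t elec1}, delivers the corollary.
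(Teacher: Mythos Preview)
Your proposal is correct and follows essentially the same route as the paper: choose $\theta$ via the tangent condition so that $\lambda_n'=0$, verify that the hypotheses of Theorem~\ref{main apli theo} reduce to $\gamma\neq0,\pm2\lambda_e$, identify these with the three forbidden values of $\cos\theta$ in \eqref{cond theta}, and use \eqref{cond exist theta} to guarantee that at least one of the two solutions $\theta,\theta+\pi$ of the tangent equation avoids the forbidden set. The paper organizes the existence step by setting $p=16-(\lambda_e^2-\lambda_n^2)^2$ and $q=4\lambda_e(\lambda_e^2-\lambda_n^2+4)$ and showing that $|p-q|,|p|,|p+q|$ are pairwise distinct (which is exactly your ``$w_0\neq -w_\pm$'' observation), so that flipping the sign of $\cos\theta$ necessarily escapes the forbidden values; your sketch is the same argument in slightly different notation.
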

\begin{proof}
This is also a consequence of Theorem \ref{main apli theo}. More precisely, a (tedious) computation shows that if $\lambda_e,\lambda_n\in\R\setminus\{0\}$ satisfy 
$|\lambda_e^2-\lambda_n^2|\neq0,4$ and if there exists $\theta\in\R$ such that \eqref{cond theta} holds, then the assuptions in the statement of Theorem \ref{main apli theo} are fulfilled, and $\lambda_n'=0$ in \eqref{pre eq4}. We leave the details of this part for the reader.

We impose \eqref{cond exist theta} to show the existence of a $\theta\in\R$ which satisfies \eqref{cond theta}. To see this, set
$$p=16-(\lambda_e^2-\lambda_n^2)^2\quad\text{and}\quad
q=4\lambda_e(\lambda_e^2-\lambda_n^2+4).$$ 
Note that $p,q\neq0$ by assumption, thus $|p+q|\neq|p-q|$.
If $|p\pm q|=|p|$ then we must have 
\begin{equation*}
\begin{split}
0&=\pm q+2p=\pm4\lambda_e(\lambda_e^2-\lambda_n^2+4)
+32-2(\lambda_e^2-\lambda_n^2)^2\\
&=\pm4\lambda_e(\lambda_e^2-\lambda_n^2+4)
-2(\lambda_e^2-\lambda_n^2+4)(\lambda_e^2-\lambda_n^2-4),
\end{split}
\end{equation*}
which is equivalent to $0=\pm2\lambda_e-(\lambda_e^2-\lambda_n^2-4)$ because $\lambda_e^2-\lambda_n^2\neq-4$ by assumption. Hence, $|p\pm q|\neq|p|$ if \eqref{cond exist theta} holds. That is, under the assumptions of the corollary, we have seen that 
\begin{equation}\label{aux 1}
p-q,\,p\text{ and }p+q\text{ have different absolute value each other.}
\end{equation}
Now, let $\theta\in\R$ be such that 
$\tan\theta=4\lambda_n/(\lambda_e^2-\lambda_n^2+4).$ If $\theta$ fulfills \eqref{cond theta} then we are done. If not, it means that $\cos\theta$ coincides with one of the three terms on the right hand side of the second condition in  \eqref{cond theta}. But then, by \eqref{aux 1}, it is enough to pick $\theta+\pi$ instead of $\theta$, since $\tan(\theta+\pi)=\tan\theta$ but $\cos(\theta+\pi)=-\cos\theta$, and so $\theta+\pi$ fulfills \eqref{cond theta}. 
\end{proof}

\section{Magnetic shell potentials}\label{s nonc magnetic}
This section concerns the construction of shell interactions for the free Dirac operator with regular potentials on $\partial\Omega$ of magnetic type. 

\begin{proposition}\label{propo magnetic}
Let $\lambda:\partial\Omega\to\R$ be of class $\CC^1(\partial\Omega)$ and such that $\lambda(x)\neq0$ for all $x\in\partial\Omega$. Set
\begin{equation}\label{l8eq10}
\Lambda_{\lambda}=-\frac{1}{\lambda}(\alpha\cdot N)-C_\sigma,
\end{equation}
where $C_\sigma$ is as in {\em Lemma \ref{l jump}}.
Then $T_{\Lambda_{\lambda}}$ given by \eqref{def T_Lambda} is self-adjoint and $T_{\Lambda_{\lambda}}=H+\VV_{\lambda}$ on $D(T_{\Lambda_{\lambda}})$, where $\VV_{\lambda}$ is defined, for $\varphi\in \Phi(\XX)$ and $\varphi_\pm$ as in {\em Lemma \ref{l jump}}, by
\begin{equation}\label{l8eq11}
\VV_{\lambda}\varphi=\lambda(\alpha\cdot N)\frac{\varphi_++\varphi_-}{2}.
\end{equation} 
\end{proposition}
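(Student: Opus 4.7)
My plan is to mirror the proof of Theorem \ref{t elec1}: first show that the generic potential $V$ coincides with $\VV_\lambda$ on $D(T_{\Lambda_\lambda})$, so that $T_{\Lambda_\lambda} = H + \VV_\lambda$ on that domain, and then invoke Theorem \ref{pre t1} after checking that $\Lambda_\lambda$ is self-adjoint and Fredholm on $L^2(\sigma)^4$.

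The identification $V\varphi = \VV_\lambda\varphi$ is a short direct computation. Taking $\varphi = \Phi(G+g) \in D(T_{\Lambda_\lambda})$, Lemma \ref{l jump}$(i)$ gives $\frac{\varphi_+ + \varphi_-}{2} = \Phi_\sigma G + C_\sigma g$, and the defining constraint $\Phi_\sigma G = \Lambda_\lambda g = -\frac{1}{\lambda}(\alpha\cdot N)g - C_\sigma g$ collapses this to $\frac{\varphi_+ + \varphi_-}{2} = -\frac{1}{\lambda}(\alpha\cdot N)g$. Multiplying by $\lambda(\alpha\cdot N)$ and using $(\alpha\cdot N)^2 = 1$ yields $\VV_\lambda\varphi = -g = V\varphi$ by \eqref{defi V}, as required.

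Self-adjointness of $\Lambda_\lambda$ is immediate: $\lambda$ is real-valued, $1/\lambda$ commutes with the Hermitian matrix $(\alpha\cdot N)$, and $C_\sigma$ is self-adjoint by Lemma \ref{l jump}. For the Fredholm property, I would compute $\Lambda_\lambda^2$ directly. Because $(\alpha\cdot N)^2 = 1$ and the scalar $1/\lambda$ commutes with $(\alpha\cdot N)$, a short expansion gives
\begin{equation*}
\Lambda_\lambda^2 = \frac{1}{\lambda^2} + \frac{1}{\lambda}\{C_\sigma,\alpha\cdot N\} + \bigl[C_\sigma,\tfrac{1}{\lambda}\bigr](\alpha\cdot N) + C_\sigma^2.
\end{equation*}
Two ingredients already available in Section \ref{s preli} handle most of the non-diagonal terms: $\{C_\sigma,\alpha\cdot N\}$ is compact on $L^2(\sigma)^4$ by \cite[Lemma 3.5]{AMV1}, and Lemma \ref{l jump}$(ii)$ rewrites $C_\sigma^2 = \frac{1}{4} + C_\sigma(\alpha\cdot N)\{C_\sigma,\alpha\cdot N\}$, which is $\frac{1}{4}$ modulo compacts. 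Granting that the commutator $[C_\sigma, 1/\lambda]$ is also compact (a standard Calder\'on-commutator-type consequence of $1/\lambda \in \CC^1(\partial\Omega)$ together with the smoothness of $\partial\Omega$), we obtain $\Lambda_\lambda^2 \equiv \frac{1}{\lambda^2} + \frac{1}{4}$ modulo compacts. Since $\lambda$ is continuous and nowhere vanishing on the compact set $\partial\Omega$, the multiplication operator $\frac{1}{\lambda^2} + \frac{1}{4} \ge \frac{1}{4}$ is bounded and boundedly invertible. Hence $\Lambda_\lambda^2$ is Fredholm, and since $\Lambda_\lambda$ is self-adjoint, so is $\Lambda_\lambda$ (using \cite[Theorem 1.46$(iii)$]{Aiena}, exactly as in the proof of Theorem \ref{t elec1}). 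Theorem \ref{pre t1} then delivers the self-adjointness of $T_{\Lambda_\lambda}$, completing the argument.

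The principal obstacle I foresee is the rigorous justification of the compactness of $[C_\sigma, 1/\lambda]$: unlike in Theorem \ref{t elec1}, where the ``coefficients'' were scalar constants, here the variable $\CC^1$-function $1/\lambda$ does not commute with the singular integral operator $C_\sigma$, and one must rely on a commutator estimate for Cauchy-type singular integrals with $\CC^1$ symbols on smooth surfaces. Apart from this, every step is a direct transcription of the electrostatic argument, with $\lambda_e = 0$ and $\lambda_n$ replaced by the variable function $\lambda$.
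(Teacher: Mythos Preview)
Your proof is correct and follows essentially the same route as the paper's: identify $V$ with $\VV_\lambda$ on the domain, then show $\Lambda_\lambda$ is self-adjoint and Fredholm by checking that $\Lambda_\lambda^2$ is a compact perturbation of the invertible multiplier $\tfrac{1}{\lambda^2}+\tfrac14$. The only organizational difference concerns precisely the obstacle you flag: instead of splitting the cross-term as $\tfrac{1}{\lambda}\{C_\sigma,\alpha\cdot N\}+[C_\sigma,\tfrac{1}{\lambda}](\alpha\cdot N)$ and invoking a Calder\'on-type commutator estimate, the paper keeps the full anticommutator $\{\tfrac{1}{\lambda}(\alpha\cdot N),C_\sigma\}$ intact, writes out its integral kernel explicitly (mirroring \cite[Lemma~3.5]{AMV1}), and observes that the $\CC^1$ regularity of $N/\lambda$ together with the smoothness of $\partial\Omega$ force the kernel to be $O(|x-z|^{-1})$, hence compact on the bounded surface---this is exactly the estimate that would justify your commutator, so the obstacle is not substantive.
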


\begin{proof}
The proof follows exactly the same lines as the one of Theorem \ref{t elec1}. To see that $V=\VV_{\lambda}$ on $D(T_{\Lambda_{\lambda}})$, let
$\varphi=\Phi(G+g)\in D(T_{\Lambda_{\lambda}})$. Then $\varphi_\pm=\Phi_\sigma G +(\mp\frac{i}{2}(\alpha\cdot N)+C_\sigma)g$ by Lemma \ref{l jump}$(i)$, which yields
$\varphi_++\varphi_-=2(\Phi_\sigma G +C_\sigma g).$ Recall that $\Phi_\sigma G=\Lambda_{\lambda} g$ by \eqref{def T_Lambda}, hence
\begin{equation*}
\begin{split}
\VV_{\lambda}\varphi
&=\lambda(\alpha\cdot N)(\Phi_\sigma G +C_\sigma g)
=-\lambda(\alpha\cdot N)\frac{1}{\lambda}(\alpha\cdot N)g
=-g=V\varphi,
\end{split}
\end{equation*}
where we used \eqref{defi V} in the last equality above.

Since $\lambda$ is continuous on the compact set $\partial\Omega$ and $\lambda(x)\neq0$ for all $x\in\partial\Omega$, there exists $\epsilon>0$ such that 
\begin{equation}\label{hyp magnetic}
\epsilon\leq\lambda(x)^2\leq 1/\epsilon\quad\text{for all } x\in\partial\Omega.
\end{equation}
Hence $\Lambda_{\lambda}$ is a bounded operator in $L^2(\sigma)^2$. The first statement in the proposition follows by Theorem \ref{pre t1}, as far as we check that $\Lambda_{\lambda}$ is self-adjoint and Fredholm. 
Since $\lambda$ is real-valued, the self-adjointness of $\Lambda_{\lambda}$ follows from the one of $C_\sigma$ and $\alpha\cdot N$. Regarding the Fredholm character, recall that $$C_\sigma^2=C_\sigma(\alpha\cdot N)\{C_\sigma,\alpha\cdot N\}+1/4$$ by Lemma \ref{l jump}$(ii)$, where $\{C_\sigma,\alpha\cdot N\}=C_\sigma(\alpha\cdot N)+(\alpha\cdot N)C_\sigma$. 
Then
\begin{equation}\label{l8eq2}
\begin{split}
\Lambda_{\lambda}^2&=\frac{1}{\lambda^2}+C_\sigma^2
+\bigg\{\frac{1}{\lambda}(\alpha\cdot N),C_\sigma\bigg\}\\
&=\frac{1}{\lambda^2}+\frac{1}{4}
+C_\sigma(\alpha\cdot N)\{C_\sigma,\alpha\cdot N\}
+\bigg\{\frac{1}{\lambda}(\alpha\cdot N),C_\sigma\bigg\}.
\end{split}
\end{equation} 
By \eqref{hyp magnetic}, $1/\lambda^2+1/4$ is bounded and invertible in $L^2(\sigma)^4$. Besides, $\{C_\sigma,\alpha\cdot N\}$ is  compact in $L^2(\sigma)^4$ by \cite[Lemma 3.5]{AMV1} because $\partial\Omega$ is smooth. The compactness of $\{\frac{1}{\lambda}(\alpha\cdot N),C_\sigma\}$ follows similarly. More precisely, arguing as in the proof of \cite[Lemma 3.5]{AMV1} one shows that 
$\{\frac{1}{\lambda}(\alpha\cdot N),C_\sigma\}g(x)
=\lim_{\epsilon\searrow0}\int_{|x-z|>\epsilon}
K(x,z)g(z)\,d\sigma(z)$ for all $g\in L^2(\sigma)^4$, where
\begin{equation*}
\begin{split}
K(x,z)&=\phi(x-z)\,\alpha\cdot \bigg(\frac{N(z)}{\lambda(z)}-\frac{N(x)}{\lambda(x)}\bigg)
+\frac{ie^{-m|x-z|}}{2\pi|x-z|^{3}}\,(1+m|x-z|)
\bigg(\frac{N(x)}{\lambda(x)}\cdot(x-z)\bigg).
\end{split}
\end{equation*}
The smoothness of $\partial\Omega$ and $\lambda$ guarantee that $K(x,z)=O(|x-z|^{-1})$ for $|x-z|\to0$, which yields the compactness of $\{\frac{1}{\lambda}(\alpha\cdot N),C_\sigma\}$ because $\partial\Omega$ is bounded.
Therefore, from \eqref{l8eq2} we see that $\Lambda^2_\lambda$ is Fredholm, and thus $\Lambda_\lambda$ is also Fredholm by \cite[Theorem 1.46$(iii)$]{Aiena}. The proof of the proposition is complete thanks to Theorem \ref{pre t1}.
\end{proof}

\begin{remark}
The reader may realize that less regularity on $\partial\Omega$ and $\lambda$ can be assumed to get the same conclusions of Proposition \ref{propo magnetic}. For instance, $N/\lambda\in\CC^\alpha(\partial\Omega)$ with $\alpha>0$ would suffice, and even less. However, we don't want to look for optimal regularity assumptions to avoid technicalities.
\end{remark}

The purpose of the following theorem is to get rid of the non-vanishing assumption of $\lambda$ in Proposition \ref{propo magnetic}. In order to do so, we use a generalization of the unitary transformations introduced at the beginning of Section \ref{s uni equiv}: 
\begin{equation*}
\text{Given $\theta:\R^3\to\R$, define
$U_\theta:L^2(\mu)^4\to L^2(\mu)^4$ by $U_\theta\varphi=e^{-i\theta\chi_{\Omega_-}}\varphi.$}
\end{equation*}

\begin{theorem}\label{t nonc magnetic}
Let $\lambda:\R^3\to\R$ be a compactly supported function of class $\CC^1(\R^3)$ and let $\lambda_{\partial\Omega}$ denote its restriction to $\partial\Omega$. Let $M>0$ be such that $\lambda_{\partial\Omega}(x)+M>0$ for all $x\in\partial\Omega$ and let $\theta:\R^3\to\R$ of class $\CC^1(\R^3)$ so that
\begin{equation}\label{xyz1}
e^{i\theta}=\frac{(\lambda+2i)(\lambda+M-2i)}{(\lambda-2i)(\lambda+M+2i)}.
\end{equation}
Then, the operator 
\begin{equation}\label{xyz3}
\big(T_{\Lambda_{\lambda_{\partial\Omega}+M}}\big)_\theta
=U_\theta^{*}T_{\Lambda_{\lambda_{\partial\Omega}+M}}U_\theta
+\chi_{\Omega_-}(\alpha\cdot\nabla\theta)
\end{equation}
defined on $U_\theta^{*}D\big(T_{\Lambda_{\lambda_{\partial\Omega}+M}}\big)$ is self-adjoint, where $\Lambda_{\lambda_{\partial\Omega}+M}$ and $T_{\Lambda_{\lambda_{\partial\Omega}+M}}$ are given by \eqref{l8eq10} and \eqref{def T_Lambda}, respectively. Moreover, 
\begin{equation}\label{xyz2}
\big(T_{\Lambda_{\lambda_{\partial\Omega}+M}}\big)_\theta
=H+\VV_{\lambda_{\partial\Omega}}\text{ on }\, U_\theta^{*}D\big(T_{\Lambda_{\lambda_{\partial\Omega}+M}}\big),
\end{equation}
where $\VV_{\lambda_{\partial\Omega}}$ is given by \eqref{l8eq11}.
\end{theorem}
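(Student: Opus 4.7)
The plan is to verify the two claims of the theorem---self-adjointness of $(T_{\Lambda_{\lambda_{\partial\Omega}+M}})_\theta$ and the identity \eqref{xyz2}---after first justifying the existence of the $\CC^1$ function $\theta$. The strategy capitalizes on Proposition \ref{propo magnetic} (applicable because $\lambda_{\partial\Omega}+M>0$) combined with a distributional ``gauge'' computation in the spirit of Section \ref{s uni equiv}, generalized to a non-constant $\theta$ (which is what forces the extra additive term $\chi_{\Omega_-}(\alpha\cdot\nabla\theta)$ in \eqref{xyz3}).

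For the existence of $\theta$, I would observe that the right-hand side of \eqref{xyz1} is a $\CC^1$ map $f:\R^3\to\mathbb{T}$: the denominators never vanish for real $\lambda$, and a direct computation of moduli gives $|f|\equiv 1$. Since $\R^3$ is simply connected, $f$ lifts to a $\CC^1$ function $\theta:\R^3\to\R$ with $e^{i\theta}=f$. Outside $\supp(\lambda)$ the map $f$ takes the constant value $-(M-2i)/(M+2i)$, so $\theta$ is locally constant there; in particular $\nabla\theta$ is compactly supported and uniformly bounded. For self-adjointness, Proposition \ref{propo magnetic} gives that $T_{\Lambda_{\lambda_{\partial\Omega}+M}}$ is self-adjoint. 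The operator $U_\theta$ is unitary by a direct check (using that $\theta$ is real-valued), so $U_\theta^{*}T_{\Lambda_{\lambda_{\partial\Omega}+M}}U_\theta$ is self-adjoint on $U_\theta^{*}D(T_{\Lambda_{\lambda_{\partial\Omega}+M}})$. Adding the bounded symmetric multiplication operator $\chi_{\Omega_-}(\alpha\cdot\nabla\theta)$ (bounded since $\nabla\theta$ has compact support and is continuous, symmetric since the $\alpha_j$ are Hermitian and $\nabla\theta$ is real-valued) preserves self-adjointness.

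The heart of the proof is the distributional computation. For $\psi=\Phi(G+g)\in D(T_{\Lambda_{\lambda_{\partial\Omega}+M}})$ set $\varphi=U_\theta^{*}\psi=\chi_{\Omega_+}\psi+\chi_{\Omega_-}e^{i\theta}\psi$. Because $\psi|_{\Omega_\pm}\in W^{1,2}(\Omega_\pm)^4$ with $H\psi=G$ classically on $\Omega_\pm$, and $e^{i\theta}\in\CC^1(\R^3)$, the product rule yields $H(e^{i\theta}\psi)=e^{i\theta}G+(\alpha\cdot\nabla\theta)e^{i\theta}\psi$ on $\Omega_-$. The standard jump formula applied to the jump $\varphi_--\varphi_+=e^{i\theta}\psi_--\psi_+$ across $\partial\Omega$ then contributes the singular term $-i(\alpha\cdot N)(e^{i\theta}\psi_--\psi_+)\sigma$ to $H\varphi$. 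Altogether,
\[
H\varphi=\tilde G-i(\alpha\cdot N)(e^{i\theta}\psi_--\psi_+)\sigma,\qquad \tilde G:=U_\theta^{*} G+\chi_{\Omega_-}(\alpha\cdot\nabla\theta)\varphi\in L^2(\mu)^4,
\]
so $\varphi=\Phi(\tilde G+\tilde g)\in\Phi(\XX)$ with $\tilde g=-i(\alpha\cdot N)(e^{i\theta}\psi_--\psi_+)\in L^2(\sigma)^4$. By \eqref{defi V}, $V\varphi=-\tilde g$, and by \eqref{xyz3}, $(T_{\Lambda_{\lambda_{\partial\Omega}+M}})_\theta\varphi=\tilde G=H\varphi+V\varphi$.

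Finally, I must identify $V\varphi$ with $\VV_{\lambda_{\partial\Omega}}\varphi$. Multiplying the desired equality by $(\alpha\cdot N)$ and using $(\alpha\cdot N)^2=1$, one reduces to $(2i-\lambda_{\partial\Omega})e^{i\theta}\psi_-=(2i+\lambda_{\partial\Omega})\psi_+$. Combining Lemma \ref{l jump}$(i)$ with the boundary condition $\Phi_\sigma G=\Lambda_{\lambda_{\partial\Omega}+M}\,g$ gives $\psi_\pm=-\bigl(\tfrac{1}{\lambda_{\partial\Omega}+M}\pm\tfrac{i}{2}\bigr)(\alpha\cdot N)g$, whence $\psi_+/\psi_-=(2+i(\lambda_{\partial\Omega}+M))/(2-i(\lambda_{\partial\Omega}+M))$; substituting into the required identity and using the elementary identities $2\pm i\mu=\pm i(\mu\mp 2i)$ recovers exactly \eqref{xyz1}, completing the proof. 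The main obstacle I anticipate is the third step---separating the regular and $\sigma$-supported pieces of $H\varphi$ cleanly---but the $\CC^1$ regularity of $\theta$ on all of $\R^3$ (not just on $\Omega_\pm$) is essential, as it guarantees that the only $\sigma$-supported contribution comes from the jump of $\varphi$ itself and no extra term is introduced by $\nabla e^{i\theta}$.
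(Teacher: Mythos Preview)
Your proposal is correct and follows essentially the same route as the paper: Proposition~\ref{propo magnetic} for the shifted operator, unitary conjugation plus a bounded symmetric perturbation (Kato--Rellich) for self-adjointness, and a distributional computation of $H$ on $U_\theta^{*}\psi$ combined with the boundary relation coming from Lemma~\ref{l jump}$(i)$ and the definition of $\Lambda_{\lambda_{\partial\Omega}+M}$ to identify the shell term with $\VV_{\lambda_{\partial\Omega}}$ via \eqref{xyz1}. The only cosmetic differences are that you swap the roles of the letters $\varphi$ and $\psi$, you justify the existence of $\theta$ by a lifting argument over simply connected $\R^3$ (the paper instead observes that the right-hand side of \eqref{xyz1} never equals $1$ and takes a branch of the argument), and you pass explicitly through the representation $\varphi=\Phi(\tilde G+\tilde g)\in\Phi(\XX)$, whereas the paper computes $(H+\VV_{\lambda_{\partial\Omega}})\psi$ directly and matches it to \eqref{xyz3}.
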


\begin{proof}
The existence of $M$ in the statement of the theorem follows from the fact that $\lambda_{\partial\Omega}$ is continuous on the compact set $\partial\Omega$, and the existence of $\theta$ is a consequence of the fact that the term on the right hand side of \eqref{xyz1} is a complex-valued function of modulus 1. Furthermore, we can take $\theta$ so that it inherits the regularity of $\lambda$. To see this, note that 
$$\frac{(\lambda(x)+2i)(\lambda(x)+M-2i)}{(\lambda(x)-2i)(\lambda(x)+M+2i)}\neq1$$
for all $x\in \R^3$ because $t\mapsto(t+2i)/(t-2i)$ is injective from $\R$ to the unit circle in $\R^2$, $M>0$ and $\lambda$ can not take the values $\pm\infty$. Hence, there exists a well defined principal value of the argument of the points given by the right hand side of \eqref{xyz1} which is as smooth as $\lambda$ (thanks to the holomorphicity of the corresponding branch of the complex logarithm).

Since ${\lambda_{\partial\Omega}+M}>0$ in $\partial\Omega$, from Proposition \ref{propo magnetic} we deduce that $T_{\Lambda_{\lambda_{\partial\Omega}+M}}$ is self-adjoint and $T_{\Lambda_{\lambda_{\partial\Omega}+M}}
=H+\VV_{\lambda_{\partial\Omega}+M}$ on $D\big(T_{\Lambda_{\lambda_{\partial\Omega}+M}}\big)$, 
where $\VV_{\lambda_{\partial\Omega}+M}$ is given by \eqref{l8eq11}. In particular, $\VV_{\lambda_{\partial\Omega}+M}=V$ on $D\big(T_{\Lambda_{\lambda_{\partial\Omega}+M}}\big)$, for $V$ as in \eqref{defi V}. Furthermore, thanks to Lemma \ref{l jump}$(i)$ this means that if $\varphi=\Phi(G+g)\in D\big(T_{\Lambda_{\lambda_{\partial\Omega}+M}}\big)$ then
\begin{equation*}
i(\alpha\cdot N)(\varphi_--\varphi_+)=-g=V\varphi
=\VV_{\lambda_{\partial\Omega}+M}\varphi
=({\lambda_{\partial\Omega}+M})(\alpha\cdot N)\frac{\varphi_++\varphi_-}{2},
\end{equation*}
and therefore, using that ${\lambda_{\partial\Omega}+M}$ is real-valued, we get
\begin{equation}\label{l8eq12}
\varphi_-=\frac{2i+{\lambda_{\partial\Omega}+M}}{2i-{\lambda_{\partial\Omega}-M}}\varphi_+.
\end{equation}

The operator $U_\theta^{*}T_{\Lambda_{\lambda_{\partial\Omega}+M}}U_\theta$ defined on $U_\theta^{*}D\big(T_{\Lambda_{\lambda_{\partial\Omega}+M}}\big)$ is unitary equivalent to $T_{\Lambda_{\lambda_{\partial\Omega}+M}}$, thus it is self-adjoint. Since $\chi_{\Omega_-}(\alpha\cdot\nabla\theta)$ is a bounded and symmetric operator in $L^2(\mu)^4$ (because $\nabla\theta$ is continuous and compactly supported due to the fact that $\lambda\in\CC^1_c(\R^3)$), an application of the Kato-Rellich theorem (see \cite[Theorem X.12]{RS}, for example) finally shows that the operator $\big(T_{\Lambda_{\lambda_{\partial\Omega}+M}}\big)_\theta$ defined on $U_\theta^{*}D\big(T_{\Lambda_{\lambda_{\partial\Omega}+M}}\big)$ by \eqref{xyz3} is also self-adjoint. 

To conclude the proof of the theorem, it only remains to show that \eqref{xyz2} holds. Let $\psi\in U_\theta^{*}D\big(T_{\Lambda_{\lambda_{\partial\Omega}+M}}\big)$, so
$\psi=e^{i\theta\chi_{\Omega_-}}\varphi$ for some $\varphi\in D\big(T_{\Lambda_{\lambda_{\partial\Omega}+M}}\big)$. First, note that
\begin{equation}\label{l8eq13}
U_\theta^{*}T_{\Lambda_{\lambda_{\partial\Omega}+M}}U_\theta\psi
=U_\theta^{*}T_{\Lambda_{\lambda_{\partial\Omega}+M}}\varphi
=U_\theta^{*}(\chi_{\Omega_+}H\varphi+\chi_{\Omega_-}H\varphi),
\end{equation}
where we denoted by $\chi_{\Omega_+}H\varphi+\chi_{\Omega_-}H\varphi$ the absolutely continuous part of the distribution $H\varphi$ (it corresponds to $G$ when we write $\varphi=\Phi(G+g)$). Using the distributional equation \cite[(5.2)]{AMV2}, we see that
\begin{equation}\label{l8eq14}
\begin{split}
(H+\VV_{\lambda_{\partial\Omega}})\psi&=(H+\VV_{\lambda_{\partial\Omega}})e^{i\theta\chi_{\Omega_-}}\varphi
=H\big(e^{i\theta\chi_{\Omega_-}}\varphi\big)
+{\lambda_{\partial\Omega}}(\alpha\cdot N)\frac{\varphi_++e^{i\theta}\varphi_-}{2}\\
&=\chi_{\Omega_+}H\varphi
+\chi_{\Omega_-}(e^{i\theta}H(\varphi)
+(\alpha\cdot \nabla\theta) e^{i\theta}\varphi)\\
&\quad-i(\alpha\cdot N)(e^{i\theta}\varphi_--\varphi_+)
+\lambda_{\partial\Omega}(\alpha\cdot N)\frac{\varphi_++e^{i\theta}\varphi_-}{2}\\
&=U_\theta^{*}(\chi_{\Omega_+}H\varphi+\chi_{\Omega_-}H\varphi)
+\chi_{\Omega_-}(\alpha\cdot \nabla\theta) e^{i\theta}\varphi\\
&\quad+(\alpha\cdot N)\bigg\{\bigg(\frac{\lambda_{\partial\Omega}}{2}-i\bigg)e^{i\theta}\varphi_-+\bigg(\frac{\lambda_{\partial\Omega}}{2}+i\bigg)\varphi_+\bigg\}.
\end{split}
\end{equation}
From \eqref{l8eq12} and \eqref{xyz1} we obtain
\begin{equation*}
\begin{split}
\bigg(\frac{\lambda_{\partial\Omega}}{2}-i\bigg)e^{i\theta}\varphi_-+\bigg(\frac{\lambda_{\partial\Omega}}{2}+i\bigg)\varphi_+
=\bigg\{\bigg(\frac{\lambda_{\partial\Omega}}{2}-i\bigg)e^{i\theta}\frac{2i+{\lambda_{\partial\Omega}+M}}{2i-{\lambda_{\partial\Omega}-M}}+\bigg(\frac{\lambda_{\partial\Omega}}{2}+i\bigg)\bigg\}\varphi_+=0,
\end{split}
\end{equation*}
and then \eqref{l8eq14}, \eqref{l8eq13} and \eqref{xyz3} yield
\begin{equation*}
\begin{split}
(H+\VV_{\lambda_{\partial\Omega}})\psi&=U_\theta^{*}(\chi_{\Omega_+}H\varphi+\chi_{\Omega_-}H\varphi)
+\chi_{\Omega_-}(\alpha\cdot \nabla\theta)e^{i\theta}\varphi\\
&=U_\theta^{*}T_{\Lambda_{\lambda_{\partial\Omega}+M}}U_\theta\psi
+\chi_{\Omega_-}(\alpha\cdot \nabla\theta)\psi
=\big(T_{\Lambda_{\lambda_{\partial\Omega}+M}}\big)_\theta\psi
\end{split}
\end{equation*}
for all $\psi\in U_\theta^{*}D\big(T_{\Lambda_{\lambda_{\partial\Omega}+M}}\big)$. The proof of the theorem is complete.
\end{proof}

\end{document}